\definecolor{myurlcolor}{rgb}{0,0,0.7}
\definecolor{myrefcolor}{rgb}{0.8,0,0}
\newtheorem*{rep@theorem}{\rep@title}
\newcommand{\newreptheorem}[2]{%
\newenvironment{rep#1}[1]{%
 \def\rep@title{#2 \ref{##1}}%
 \begin{rep@theorem}}%
 {\end{rep@theorem}}}
\theoremstyle{plain}
\newtheorem{thm}{Theorem}
\newtheorem{lem}{Lemma}
\theoremstyle{definition}
\theoremstyle{remark}
\newcommand{\ket}[1]{|#1\rangle}
\newcommand{\bra}[1]{\langle#1|}
\newcommand{\proj}[1]{\ket{#1}\!\bra{#1}}
\newcommand{\ot}[0]{\otimes}
\begin{document}


\title{Bell inequalities tailored to maximally entangled states}


\author{Alexia Salavrakos}
\affiliation{ICFO-Institut de Ciencies Fotoniques, The Barcelona Institute of Science and Technology, 08860 Castelldefels, Barcelona, Spain}
\author{Remigiusz Augusiak}
\affiliation{Center for Theoretical Physics, Polish Academy of Sciences, Aleja Lotnik\'ow 32/46, 02-668 Warsaw, Poland}
\author{Jordi Tura}
\affiliation{ICFO-Institut de Ciencies Fotoniques, The Barcelona Institute of Science and Technology, 08860 Castelldefels, Barcelona, Spain}
\affiliation{Max-Planck-Institut f\"ur Quantenoptik, Hans-Kopfermann-Stra{\ss}e 1, 85748 Garching, Germany}
\author{Peter Wittek}
\affiliation{ICFO-Institut de Ciencies Fotoniques, The Barcelona Institute of Science and Technology, 08860 Castelldefels, Barcelona, Spain}
\affiliation{University of Bor{\aa}s, Allegatan 1, 50190 Bor{\aa}s, Sweden}
\author{Antonio Ac\'in}
\affiliation{ICFO-Institut de Ciencies Fotoniques, The Barcelona Institute of Science and Technology, 08860 Castelldefels, Barcelona, Spain}
\affiliation{ICREA - Institucio Catalana de Recerca i Estudis Avan\c{c}ats, E-08010 Barcelona, Spain}
\author{Stefano Pironio}
\affiliation{Laboratoire d'Information Quantique, CP 224, Universit\'e libre de Bruxelles (ULB), 1050 Bruxelles, Belgium}



\date{\today}

\begin{abstract}
Bell inequalities have traditionally been used to demonstrate that quantum theory is nonlocal, in the sense that there exist correlations generated from composite quantum states that cannot be explained by means of local hidden variables. With the advent of device-independent quantum information protocols, Bell inequalities have gained an additional role as certificates of relevant quantum properties. In this work we consider the problem of designing Bell inequalities that are tailored to detect maximally entangled states. We introduce a class of Bell inequalities valid for an arbitrary number of measurements and results, derive analytically their tight classical, non-signalling and quantum bounds and prove that the latter is attained by maximally entangled states. Our inequalities can therefore find an application in device-independent protocols requiring maximally entangled states.
\end{abstract}

\pacs{}

\maketitle

\section{Introduction}
Measurements on separated subsystems in a joint entangled state may display correlations that cannot be mimicked by local hidden variable (LHV) models. 
These correlations are termed nonlocal and are detected by violating Bell inequalities \cite{bell1964,review}. In recent years it has become clear that non-locality is interesting not only for fundamental reasons, but also as a resource for device-independent (DI) quantum information tasks \cite{review} such as quantum key distribution
\cite{mayersyao,acin07} or random number generation 
\cite{colbeck,
pironionature}. Thus, violations of Bell inequalities are not only indicators of non-locality, but can also be used to make qualitative and quantitative statements about operationally relevant quantum properties.

Traditionally, the problem of constructing Bell inequalities has been addressed from the point of view of deriving constraints satisfied by LHV models. Following this standard approach, the inequalities are derived using well-known techniques in convex geometry. Indeed, the set of correlations admitting LHV models defines a polytope \cite{review}, i.e., a bounded convex set with a finite number of vertices. These vertices correspond to local deterministic assignments, while the facets are the desired Bell inequalities. Facet (or tight) Bell inequalities provide necessary and sufficient criteria to detect the non-locality of given correlations. Clauser-Horne-Shimony-Holt (CHSH)~\cite{chsh1969} and Collins-Gisin-Linden-Massar-Popescu (CGLMP)~\cite{cglmp} Bell inequalities are examples thereof. 

Although such facet Bell inequalities are optimal detectors of non-locality, they are not necessarily optimal for inferring specific quantum properties in the DI setting. For instance, in a scenario where two binary measurements are performed on two entangled subsystems, it is well known that the violation of the CHSH inequality \cite{chsh1969} is a necessary and sufficient condition for non-locality. But certain ``non-facet" Bell inequalities are better certificates of randomness than the CHSH one when the two quantum systems are partially entangled \cite{acin2012}.

The main aim of this work is to introduce Bell inequalities valid for an 
arbitrary number of measurements and outcomes
whose maximal quantum violation, usually referred to as the \textit{Tsirelson bound}
\cite{tsirelson1980}, is attained by maximally entangled states 
\begin{equation}
|\psi^+_d \rangle=(1/\sqrt{d})\sum_{i=0}^{d-1}|ii\rangle.
\end{equation}
This is a desirable property since these states have particular features such as perfect correlations between outcomes of local measurements in the same bases, and therefore many quantum information protocols rely on them. In the particular case of two measurements CHSH is the simplest example of a Bell inequality with the above property, but others are known \cite{wonminson,lee2007,devicente2015} (see also results for many settings \cite{Ji,Liang,Lim}). Our construction works, however, for arbitrary numbers of measurements and outcomes, and, crucially, the Tsirelson bound of the resulting Bell inequalities can be computed \textit{analytically}.

In the case where only two measurements are made on each subsystem, all facet Bell inequalities are known for a small number of outputs and they are of the CGLMP form \cite{cglmp}. 
However, they are not maximally violated by the maximally entangled states of two qudits  (except in the case $d=2$ corresponding to the CHSH inequality) \cite{acindurt,zohrengill,npa2007}. Thus, we should not expect our Bell inequalities to be tight, and indeed they are not.

This implies that we cannot use standard tools from convex geometry to construct them. In fact, no quantum property is used for the construction of tight Bell inequalities like the CGLMP one and, in this sense, it is not surprising that their maximal violation does not require maximal entanglement. Our approach is completely different: it starts from quantum theory and exploits the symmetries and perfect correlations of maximally entangled states to derive a Bell inequality (cf. Ref. \cite{Lim} for a similar method). It exploits sum of squares decompositions of Bell operators, which is used to determine their Tsirelson bound. Thus, 
contrary to any previous derivation of Bell inequalities, 
quantum theory becomes a key ingredient of our method.

Our results provide new insight into the structure of the boundary of the set of quantum correlations (see discussion in Appendix \ref{appendixstructure}). In addition, our Bell inequalities have the potential to be used in DI quantum information protocols
such as random number generation, quantum key distribution or to self-testing \cite{mckague2012} (see more detailed discussion in Section \ref{secappli}).

\section{Preliminaries} We consider a Bell scenario with two distant parties $A$ and $B$ performing one of $m$ measurements $A_x$ and $B_y$ 
with $d$ outcomes on their share of some physical system. 
We label the measurements and outcomes as $x,y \in \{1,\ldots, m\}$ and $a,b \in \{0,\ldots, d-1\}$. The correlations obtained in this experiment are described by $(md)^2$ joint probabilities $P(A_x=a,B_y=b)$ that $A$ and $B$ obtain $a$ and $b$ upon performing the $x$th and $y$th measurement, respectively. These probabilities are ordered into a vector 
%
$\vec{p}:=\{P(A_x=a,B_y=b)\}_{a,b,x,y} \in \mathbbm{R}^{(md)^2}.$

Importantly, the set of allowed vectors $\vec{p}$ varies depending on the physical principle they obey. 
If the measurements define spacelike separated events, the observed correlations
should obey the \textit{no-signalling principle}, which prevents
any faster-than-light communication among the parties.
These correlations form a convex polytope denoted $\mathcal{N}$. Contained in this set is 
the set of quantum correlations $\mathcal{Q}$ which is formed by 
those $\vec{p}$ whose components can be written as 
$P(A_x=a,B_y=b)=\langle \psi|P_a^{(x)}\otimes P_b^{(y)}|\psi\rangle$, 
where $|\psi\rangle$ is some state in a product Hilbert space $H_A\otimes H_B$ of unconstrained dimension, and $\{P_a^{(x)}\}$ and $\{P_b^{(y)}\}$ are projection operators defining, respectively, Alice's and Bob's measurements.
Finally, the set of correlations admitting LHV models, denoted $\mathcal{L}$, contains those $\vec{p}$ that can be written as a convex sum of product deterministic correlations $P(A_x=a,B_y=b)=P(A_x=a)P(B_y=b)$
with $P(A_x=a),P(B_y=b)=0,1$ for all $x,y$ \cite{fine1982}.

Bell was the first to prove that not all quantum correlations admit an LHV model \cite{bell1964}. To this end, he used the concept of a Bell inequality $I\leq C_b$, where 
$I$ is the so-called Bell expression that is a linear combination of the $(md)^2$ joint probabilities of the form
\begin{equation}\label{BE}
I:=\sum_{abxy} I_{abxy} P(A_x=a,B_y=b),
\end{equation}
and $C_b=\max_{\vec{p} \in \mathcal{L}}I$ is its classical bound.
The quantum or Tsirelson bound of $I$ is the maximum value $Q_b=\max_{\vec{p} \in \mathcal{Q}}I$ that it
can achieve for quantum correlations.
A Bell expression $I$ gives rise to a proper Bell inequality---one that is violated by quantum theory---if $C_b < Q_b$. 
If $\vec{p}$ violates a Bell inequality, the correlations described by $\vec{p}$ are termed nonlocal.
Finally, one defines $NS_b=\max_{\vec{p} \in \mathcal{N}}I$ as the maximum value of
$I$ over no-signalling correlations. For most of the known Bell inequalities 
$NS_b>Q_b>C_b$ \cite{bell1964,PR,ravi}.

Let us stress that although $\mathcal{Q}$ is convex, it is not a polytope. More importantly, the boundary of $\mathcal{Q}$ remains unknown 
despite several attempts to characterize it analytically
\cite{pawlowski09,wunderlich,lorthogonality,navascues2015} (see, nevertheless, \cite{Masanes}). 
This clearly makes the derivation of Tsirelson bounds a hard task. Given a Bell inequality, there is no procedure that guarantees finding its quantum bound, and it was achieved analytically only in a handful of cases. There is, however, a practical approximation scheme based on semidefinite programming, 
which consists in a hierarchy of sets $\mathcal{Q}_{1} \supseteq \mathcal{Q}_{2} \supseteq \dots \supseteq \mathcal{Q}_{k} \supseteq \dots$ converging to $\mathcal{Q}$ as $k \to  \infty$, and allows one to bound $Q_b$ from above \cite{npa2007} (see also \cite{devicente2015}).
Although for small Bell scenarios this method yields good numerical bounds (often tight), it becomes computationally expensive for scenarios involving a large number of measurements or outcomes.

\section{Class of Bell expressions}\label{secclass}
Our aim now is to introduce a family of Bell expressions, whose maximal quantum value is attained by the \textit{two-qudit} maximally entangled state $|\psi^+_d \rangle$. To derive them, we start from the premise that their maximal quantum values are obtained when Alice and Bob perform the optimal CGLMP measurements introduced in \cite{kaszlikowski,cglmp,barrett2006} (cf. Appendix \ref{appendixmeasurements}). This choice stems from the fact that these measurements generalize the CHSH measurements ($d=2$) to arbitrary dimensions and they lead to non-local correlations that are most robust to noise~\cite{kaszlikowski} or for $m=2$ give a stronger statistical test~\cite{gill}. 

The probabilities $P(A_x=a,B_y=b)$ obtained when using the optimal CGLMP measurements on $|\psi^+_d \rangle$ have several symmetries. For instance, they only depend on the difference $a-b=k\mod d$. If we impose that our Bell expressions respect this symmetry, the probabilities $P(A_{x} = j + k \text{ mod }d,\text{ } B_{y} = j)$ should be treated equally for all $j$, i.e., the Bell expressions should be linear combinations of $P(A_{x} = B_{y} + k) := \sum_{j = 0}^{d-1} P(A_{x} = j + k \text{ mod }d,\text{ } B_{y} = j)$. Taking into account all symmetries, a generic form for our Bell expressions is 
\begin{equation}\label{Bellproba}
I_{d,m}:=\sum_{k=0}^{\left \lfloor d/2\right \rfloor-1}\left(
\alpha_k \mathbbm{P}_k -\beta_k \mathbbm{Q}_k \right ),
\end{equation}
where $\mathbbm{P}_k := \sum_{i=1}^{m}[P(A_i=B_i+k)+P(B_i=A_{i+1}+k)]$, $\mathbbm{Q}_k := \sum_{i=1}^{m}[P(A_i=B_i-k-1)+P(B_i=A_{i+1}-k-1)]$ with $A_{m+1}:=A_1+1$. 
The parameters $\alpha_k$ and $\beta_k$ are our degrees of freedom. Taking, e.g., $\alpha_k = \beta_k = 1 - 2k/(d - 1)$ for $m=2$, one recovers the CGLMP Bell inequalities.

To exploit the symmetries inherent in Bell inequalities, we often write them in terms of correlators instead of probabilities. As we consider an arbitrary number of outcomes,
we appeal to the notion of generalized correlators (see, e.g., Ref. \cite{Liang} and Ref. \cite{bancal12} for other options).
These are complex numbers that are defined through the two-dimensional 
Fourier transform of the probabilities $P(A_x=a,B_y=b)$:
\begin{equation}
 \langle A_x^kB_y^l\rangle=\sum_{a,b=0}^{d-1}\omega^{ak+bl}P(A_x=a,B_y=b),
\label{correlators}
\end{equation}
where $\omega = \text{exp}(2\pi i /d)$, $k,l\in \{0,\ldots,d-1\}$, and 
$\{A_x^k\}_k$ and $\{B_y^l\}_l$ can be thought of as measurements with outcomes labelled by roots of unity $\omega^i$ $(i=0,\ldots,d-1)$.
For quantum correlations $\vec{p}$, the correlators $\langle A_x^kB_y^l\rangle$ are average values of the tensor product of the operators
%
$A_x^{k}=\sum_{a=0}^{d-1}\omega^{ak}P_a^{(x)}\quad \mathrm{and} \quad B_y^{l}=\sum_{b=0}^{d-1}\omega^{bl}P_b^{(y)}$
%
in the state $\ket{\psi}$. 
Note that they are unitary, their eigenvalues are the roots of unity, and they satisfy $(A_x^{k})^{\dagger}=A_x^{d-k}$ and $(B_y^{l})^{\dagger} = B_{y}^{d-l}$ for any $k,l$.

Now, exploiting (\ref{correlators}), expression (\ref{Bellproba}) 
can be rewritten as
\begin{equation}\label{Bellcorr}
 \widetilde{I}_{d,m}= \sum_{i=1}^m\sum_{l=1}^{d-1}\langle A_i^l\bar{B}_i^l \rangle,
\end{equation}
where, for clarity, the change of variables $\bar{B}_i^l = a_l B_i^{d-l}+a_l^{*}B_{i-1}^{d-l}$ with $ a_l=\sum_{k=0}^{\lfloor d/2\rfloor-1}(\alpha_k\omega^{-kl}-\beta_k\omega^{(k+1)l})$ was introduced on Bob's side. Due to the convention $A_{m+1}=A_1+1$, the term $\bar{B}_1^l$ is defined as $\bar{B}_1^l=a_lB_{1}^{d-l}+a_l^* \omega^lB_{m}^{d-l}$. 
For simplicity, in (\ref{Bellcorr}) we ignored the irrelevant scalar term corresponding to $l=0$ and rescaled the expression.
Below we denote the classical, quantum and no-signaling bound of $\widetilde{I}_{d,m}$ by
$\widetilde{C}_b$, $\widetilde{Q}_b$ and $\widetilde{NS}_b$, respectively.

Our aim now is to fix the free parameters $\alpha_k$ and $\beta_k$ according to the quantum property we need: maximal violation by the maximally entangled state $| \psi^+_d \rangle$. At this point, it is instructive to look at the specific example of the CHSH Bell expression ($m=2$, $d=2$). In the notation (\ref{Bellcorr}) 
the CHSH Bell expression $\langle A_1B_1\rangle + \langle A_1 B_2\rangle + \langle A_2 B_1\rangle- \langle A_2 B_2\rangle$ reads
%
$\widetilde{I}_{2,2}=\langle A_1 \bar{B}_1\rangle +\langle A_2 \bar{B}_2\rangle,$
where $\bar{B}_1=(B_1 + B_2)/\sqrt{2}$, $\bar{B}_2=(B_1 - B_2)/\sqrt{2}$. Then, for the optimal measurements leading to the Tsirelson bound of $\widetilde{I}_{2,2}$, we have $\bar{B}_1=A_1^*$ and $\bar{B}_2=A_2^*$. This reflects the property that
for the maximally entangled state
\begin{equation}\label{Raimat}
M\otimes N\ket{\psi_d^+}=\mathbbm{1}\otimes NM^{T}\ket{\psi_d^+},\quad \forall M,N.
\end{equation}
This condition implies that a measurement by Alice is perfectly correlated 
with its complex conjugate by Bob.
Our intuition to derive Bell inequalities detecting maximal entanglement is to impose this property for any $m$ and $d$: we choose the parameters $\alpha_k$ and $\beta_k$ such that 
\begin{equation}\label{conditions}
\bar{B}_{i}^{l} = (A_{i}^{l})^{*}
\end{equation}
hold for $l = 1,\ldots, d -1$ and $i = 1, \ldots, m$ with the initial operators $\{P_a^{(x)}\}$ and $\{P_b^{(y)}\}$ being the optimal CGLMP operators. 
Conditions (\ref{conditions}) give rise to a set of linear equations for 
$\alpha_k$ and $\beta_k$ which yields (see Appendix \ref{appendixcoefficients} for details)
\begin{equation}\label{alpha}
\alpha_k= \frac{1}{2d}\tan\left(\frac{\pi}{2m}\right) \left[g(k) - g\left(\left\lfloor \frac{d}{2} \right\rfloor\right)\right],
\end{equation}
\begin{equation}\label{beta}
\beta_k= \frac{1}{2d}\tan\left(\frac{\pi}{2m}\right) \left[g\left(k + 1 - \frac{1}{m}\right) + g\left(\left\lfloor \frac{d}{2} \right\rfloor\right)\right]
\end{equation}
with $g(x):=\cot(\pi(x+ 1/2m)/d)$.

To sum up, our class of Bell expressions is given by $I_{d,m}$ (\ref{Bellproba}) or equivalently by $\widetilde{I}_{d,m}$ (\ref{Bellcorr}), with coefficients (\ref{alpha}) and (\ref{beta}). We arrived at it by writing the most general Bell expression satisfying the symmetry of CGLMP correlations, re-writing these Bell expressions in the simple form (\ref{Bellcorr}) through a change of variable on Bob's side, and then imposing the conditions (\ref{conditions}) that take into account the symmetries of the maximally entangled state, as CHSH does for two binary measurements.

\section{Properties of the novel Bell expressions}\label{secproperties}
We now  
analyze the main properties of our Bell expressions: we compute all the relevant bounds $\widetilde{C}_b$, $\widetilde{Q}_b$, $\widetilde{NS}_b$, and show that 
$\widetilde{C}_b<\widetilde{Q}_b<\widetilde{NS}_b$ for any $d$ and $m$. For clarity we only include sketches of proofs (see Appendices \ref{appendixclassical}, \ref{appendixtsirelson} and \ref{appendixns} for details).

Let us begin with the classical bound.

\begin{thm}\label{theoclass}
The classical bound of $\widetilde{I}_{d,m}$ is given by $\widetilde{C}_{b} = (1/2) \tan \left(\pi/2m\right) \left\{ (2m - 1) g(0) - g(1 - 1/m)\right\} - m.$
\end{thm}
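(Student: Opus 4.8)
The plan is to use that the classical bound is attained at a vertex of the local polytope, i.e.\ a deterministic strategy in which Alice outputs a fixed value $a_i$ for setting $i$ and Bob a fixed $b_i$ for setting $i$. For such a strategy the generalized correlators collapse to roots of unity, $\langle A_i^{l}B_j^{d-l}\rangle=\omega^{(a_i-b_j)l}$, so each term $\langle A_i^l\bar B_i^l\rangle$ of \eqref{Bellcorr} becomes $a_l\,\omega^{(a_i-b_i)l}+a_l^{*}\,\omega^{(a_i-b_{i-1})l}$ (with the cyclic shift $A_{m+1}=A_1+1$ contributing an extra $\omega^{l}$ in the $i=1$ term). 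First I would introduce the $2m$ ``edge differences'' running around the cycle $A_1,B_1,A_2,B_2,\dots,A_m,B_m$, and observe that summing them telescopes to $-1$; hence the only constraint on the deterministic data is $\sum_{e=1}^{2m}\delta_e\equiv-1\pmod{d}$, while the Bell value depends on the strategy only through $\sum_e h(\delta_e)$, where $h(\delta):=\sum_{l=1}^{d-1}a_l\,\omega^{\delta l}$ (which will turn out real, so the $a_l^{*}$ terms just contribute $h$ of the reversed difference).

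Next I would bring $h$ into closed form. Substituting the coefficients \eqref{alpha}--\eqref{beta} into $a_l$ and using $\sum_{l=1}^{d-1}\omega^{ml}=d\,\delta_{m\equiv0}-1$, the contributions of $\alpha_k$ and $\beta_k$ merge: the reflection identity $\cot(\pi-x)=-\cot x$ shows that both the ``$\alpha$'' values ($\delta\le\lfloor d/2\rfloor-1$) and the ``$\beta$'' values ($\delta\ge\lceil d/2\rceil$) equal the single decreasing function $g(\delta)=\cot(\pi(\delta+\tfrac1{2m})/d)$ evaluated at the integer $\delta$. The additive constants collapse by the classical cotangent summation $\sum_{k=0}^{d-1}\cot(x+k\pi/d)=d\cot(dx)$ at $x=\pi/(2md)$, which yields $\tfrac12\tan(\pi/2m)\,g(\lfloor d/2\rfloor)+\sum_k(\alpha_k-\beta_k)=\tfrac12$. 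Altogether this gives the clean form $h(\delta)=\tfrac12\tan(\pi/2m)\,g(\delta)-\tfrac12$, so that on any deterministic strategy $\widetilde I_{d,m}=\tfrac12\tan(\pi/2m)\sum_{e=1}^{2m}g(\delta_e)-m$.

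It then remains to maximize $\sum_e g(\delta_e)$ subject to $\sum_e\delta_e\equiv-1\pmod d$. Since $\cot$ is strictly decreasing on $(0,\pi)$, $g$ is strictly decreasing on $\{0,\dots,d-1\}$, so $\delta=0$ uniquely maximizes each term but is forbidden on all edges at once. I would prove that the optimum concentrates the whole deficit on one edge by an exchange argument based on $g(\delta_a)+g(\delta_b)=\sin(\theta_a+\theta_b)/(\sin\theta_a\sin\theta_b)$ with $\theta_\delta=\pi(\delta+\tfrac1{2m})/d$: two positive differences with $\delta_a+\delta_b\le d-1$ may be replaced by $(0,\delta_a+\delta_b)$ without decreasing the sum (the angle-sum is fixed and spreading to the extreme minimizes $\sin\theta_a\sin\theta_b>0$), whereas two with $\delta_a+\delta_b\ge d$ may be replaced by $(0,\delta_a+\delta_b-d)$ with a \emph{strict} increase (here $\sin(\theta_a+\theta_b)<0$ flips the sign). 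Each move lowers the number of nonzero edges, so the maximizer has a single nonzero difference, necessarily $d-1$. Using $g(d-1)=-g(1-\tfrac1m)$ this gives $\max\sum_e g(\delta_e)=(2m-1)g(0)-g(1-\tfrac1m)$, and substituting into the displayed expression reproduces exactly $\widetilde C_b$.

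The main obstacle is the middle step: securing the closed form $h(\delta)=\tfrac12\tan(\pi/2m)g(\delta)-\tfrac12$, i.e.\ verifying that the sums of $\alpha_k,\beta_k$ collapse to the single constant $\tfrac12$ through the cotangent identity, and then making the exchange argument watertight (controlling the sign of $\sin(\theta_a+\theta_b)$ in both regimes and checking that the reductions terminate at one nonzero edge). Once $h$ is in this form, the optimization and the final substitution are routine.
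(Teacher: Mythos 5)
Your proposal is correct, and although it follows the same overall skeleton as the paper's proof---restriction to deterministic strategies, reduction to $2m$ cyclic differences constrained to sum to $-1 \bmod d$, and concentration of the whole deficit on a single edge---the two key technical steps are carried out differently. First, you evaluate the correlator form $\widetilde{I}_{d,m}$ directly on deterministic points via the closed form $h(\delta)=\tfrac{1}{2}\tan(\pi/2m)\,g(\delta)-\tfrac{1}{2}$, obtained by Fourier inversion from Eqs.~(\ref{alpha})--(\ref{beta}) together with the summation identity $\sum_{k=0}^{d-1}\cot(x+k\pi/d)=d\cot(dx)$; the paper instead optimizes the probability form $I_{d,m}$ (after merging the $\beta$'s into the $\alpha$'s through $\alpha_k=-\beta_{d-k-1}$, which is exactly your reflection identity) and converts to $\widetilde{C}_b$ only at the end through Eq.~(\ref{relation}). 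These are equivalent computations packaged differently---your $h(\delta)$ is precisely the sum evaluated in Eq.~(\ref{rownanie})---and both reduce the problem to maximizing $\sum_{e}g(\delta_e)$ under the mod-$d$ constraint. The genuine difference lies in how the concentration step is justified: the paper proves the pairwise dominance $\hat{\alpha}_0+\hat{\alpha}_p>\hat{\alpha}_k+\hat{\alpha}_l$ (Lemma \ref{lemma3}) from the monotonicity of $z\cot z$ (Lemma \ref{lemma2}), including the somewhat ad hoc estimate $2(m-1)(1+2m)/(2m-1)\geq 10/3>2$, and then runs a dynamic-programming chain of single-variable maximizations; you instead use the addition identity $\cot\theta_a+\cot\theta_b=\sin(\theta_a+\theta_b)/(\sin\theta_a\sin\theta_b)$ and constraint-preserving merge moves $(\delta_a,\delta_b)\mapsto(0,\delta_a+\delta_b \bmod d)$, with the wrap-around case handled by the sign flip of $\sin(\theta_a+\theta_b)$. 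Your exchange step is sound in both regimes: for $\delta_a+\delta_b\leq d-1$ one has $\theta_a+\theta_b<\pi$ (since $1/m\leq 1/2$), so shrinking $\sin\theta_a\sin\theta_b$ at fixed angle sum cannot decrease the value, while for $\delta_a+\delta_b\geq d$ the old pair contributes a negative amount and the new pair a positive one; each move lowers the number of nonzero edges, so the process terminates at a single edge equal to $d-1$, and $g(d-1)=-g(1-1/m)$ then gives the stated bound. Your route buys a cleaner, more conceptual optimization argument free of numerical estimates, at the price of having to secure the closed form of $h$ up front; the paper's route keeps the optimization over a manifestly real probability expression and delegates that computation to its Appendix B.
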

\begin{proof}
We start with the expression $I_{d,m}$. Since we can restrict the problem to local deterministic strategies, finding $\widetilde{C}_b$ becomes a question of distributing $0$s and $1$s over all the terms $P(A_x = B_y + z)$. It turns out that the optimal strategy is to set $2m - 1$ of the terms multiplied by $\alpha_0$ and a single term multiplied by $\beta_0$ to one, and the remaining terms to zero.
\end{proof}

Importantly, the resulting Bell inequality $\widetilde{I}_{d,m}\leq \widetilde{C}_b$ is violated by quantum theory; one can reach the value $\widetilde{I}_{d,m}=m(d-1)$ by applying the CGLMP measurements on $\ket{\psi^+_d}$. This is seen by using Eq. (\ref{conditions}), the unitarity of $A_{i}^{k}$, and the symmetries of the maximally entangled states (\ref{Raimat}). Then, all the correlators in (\ref{Bellcorr}) equal one, yielding the quantum violation of $m(d-1)$.
This violation is optimal and defines the tight Tsirelson bound of $\widetilde{I}_{d,m}$.

\begin{thm}\label{theoquantum}
The Tsirelson bound of $\widetilde{I}_{d,m}$ is given by $\widetilde{Q}_{b} = m(d-1)$. 
\end{thm}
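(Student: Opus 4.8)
The plan is to pass to the Bell operator picture and to bound it from above by an explicit sum-of-squares (SOS) decomposition. Writing $\widetilde{I}_{d,m}=\bra{\psi}\hat{\mathcal{B}}\ket{\psi}$ with
\[
\hat{\mathcal{B}}=\sum_{i=1}^m\sum_{l=1}^{d-1}A_i^l\otimes\bar{B}_i^l,
\]
where now $A_i^l,B_j^l$ are treated as abstract unitaries obeying only $(A_i^l)^\dagger=A_i^{d-l}$, $(B_j^l)^\dagger=B_j^{d-l}$ and $(A_i^l)^d=(B_j^l)^d=\mathbbm{1}$, the task is to establish the operator inequality $\hat{\mathcal{B}}\preceq m(d-1)\mathbbm{1}$. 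The matching lower bound $\widetilde{Q}_b\geq m(d-1)$ is already in hand: the text shows that the CGLMP measurements on $\ket{\psi_d^+}$ make every correlator in (\ref{Bellcorr}) equal to one, by virtue of (\ref{conditions}), the unitarity of the $A_i^l$ and the transpose trick (\ref{Raimat}). Hence the whole content of the theorem is the upper bound.

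First I would look for operators $P_{i,l}$, linear in the measurement operators, such that
\[
m(d-1)\mathbbm{1}-\hat{\mathcal{B}}=\sum_{i=1}^m\sum_{l=1}^{d-1}\eta_{i,l}\,P_{i,l}^\dagger P_{i,l},\qquad \eta_{i,l}>0 .
\]
Positivity of the right-hand side is then automatic, and sandwiching between $\bra{\psi}$ and $\ket{\psi}$ yields $m(d-1)-\widetilde{I}_{d,m}=\sum_{i,l}\eta_{i,l}\,\|P_{i,l}\ket{\psi}\|^2\geq 0$ for every state and every choice of measurements, which is exactly $\widetilde{Q}_b\leq m(d-1)$. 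The natural ansatz, dictated by condition (\ref{conditions}), is $P_{i,l}\propto A_i^l\otimes\mathbbm{1}-\mathbbm{1}\otimes(\bar{B}_i^l)^\dagger$: at the CGLMP optimum $\bar{B}_i^l=(A_i^l)^*$, so these operators annihilate $\ket{\psi_d^+}$ through (\ref{Raimat}), which guarantees that the bound, once proved, is saturated and therefore tight.

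The heart of the argument is to verify that such a decomposition actually exists and reproduces the constant $m(d-1)$. Expanding $P_{i,l}^\dagger P_{i,l}$ and using unitarity, the cross terms collapse, after summing the contributions at $l$ and $d-l$, onto $-2\hat{\mathcal{B}}$ as required (so a uniform weight $\eta_{i,l}=1/2$ handles the correlator part). The difficulty is that $(\bar{B}_i^l)^\dagger\bar{B}_i^l$ is not proportional to the identity, since $\bar{B}_i^l$ is a sum of two unitaries acting at neighbouring settings: besides a scalar $2|a_l|^2\mathbbm{1}$, its expansion produces genuine Bob--Bob products such as $B_i^{d-l}B_{i-1}^l$ and their conjugates. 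The main obstacle is therefore to show that these spurious products cancel once summed over $i$ around the cycle $A_{m+1}=A_1+1$ --- or else to enlarge the family $\{P_{i,l}\}$ so that they are reabsorbed into further squares --- and that the surviving scalar collapses \emph{exactly} to $m(d-1)$. This last step is where the specific coefficients $a_l$, i.e.\ the CGLMP angles encoded in (\ref{alpha})--(\ref{beta}), are essential: the cancellation and the evaluation of the residual constant reduce to a trigonometric identity for sums of $\cot$/$\tan$ over the settings, which is precisely the computation relegated to Appendix \ref{appendixtsirelson}. I expect the bookkeeping of these cross terms and the final trigonometric summation to be the only real work; everything else follows from the manifest positivity of the squares.
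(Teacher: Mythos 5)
Your setup coincides with the paper's own: the paper proves the upper bound via an SOS decomposition of $\widetilde{Q}_b\mathbbm{1}-\mathcal{B}$ whose first family of squares is precisely your $P_{ik}=\mathbbm{1}\otimes\bar{B}_i^k-(A_i^k)^\dagger\otimes\mathbbm{1}$ with uniform weight $1/2$, and the lower bound is the same direct computation with the CGLMP measurements on $\ket{\psi^+_d}$. You have also correctly isolated the obstruction, namely that $\sum_{i,k}(\bar{B}_i^k)^\dagger\bar{B}_i^k$ is not a scalar but contains Bob--Bob products. However, the proposal stops exactly where the real proof begins, and the first of your two suggested resolutions would fail. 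The spurious products do \emph{not} cancel when summed around the cycle for $m>2$: the sum produces one product $(B_{i}^{d-k})^\dagger B_{i+1}^{d-k}$ (plus its conjugate) for each of the $m$ distinct neighbouring pairs of settings, and since the operators are abstract unitaries these terms are linearly independent, so no cancellation within the first family is possible. (Cancellation does occur for $m=2$, where both cross terms involve the same pair $(B_1,B_2)$ and the phase in $\bar{B}_1^k=a_kB_1^{d-k}+a_k^*\omega^kB_2^{d-k}$ makes the coefficient $a_k^2+(a_k^*)^2\omega^k$ vanish, since $\omega^{d/4}+\omega^{-d/4}=0$; this is why the paper's second family of squares is empty precisely when $m=2$.)

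For $m>2$ the only viable branch is your second one, ``enlarge the family of squares,'' and that is the genuinely hard content which the proposal leaves unconstructed. The paper adds $(m-2)(d-1)$ extra squares $T_{ik}=\mu_{i,k}B_2^{d-k}+\nu_{i,k}B_{i+2}^{d-k}+\tau_{i,k}B_{i+3}^{d-k}$, acting on Bob's side only and each combining \emph{three} operators, with coefficients (\ref{coeffabc})--(\ref{coeffm2}) built from ratios such as $\sqrt{\sin(\pi i/m)/\sin\left[\pi(i+1)/m\right]}$. The verification then rests on three simultaneous coincidences: the unwanted products $(B_2^{d-k})^\dagger B_{i+3}^{d-k}$ telescope away because $\mu_{i,k}^*\tau_{i,k}+\mu_{i+1,k}^*\nu_{i+1,k}=0$; the surviving products appear with exactly the coefficients $-a_k^2$ (and $-\omega^k(a_k^*)^2$ for the pair closing the cycle) needed to kill the cross terms coming from the $P_{ik}$'s; and the leftover scalars $\sum_{i,k}\left(|\mu_{i,k}|^2+|\nu_{i,k}|^2+|\tau_{i,k}|^2\right)$ combine with $2m\sum_k|a_k|^2$ to give exactly $m(d-1)$. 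Finding a family of operators and coefficients with these three properties at once is not routine bookkeeping --- it is the substance of Theorem \ref{theoquantum} --- so as written your argument establishes $\widetilde{Q}_b\geq m(d-1)$ but not the matching upper bound.
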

\begin{proof}
We construct a sum-of-squares (SOS) decomposition of the shifted Bell operator $\widetilde{\mathcal{B}} := \widetilde{Q}_{b}\mathbbm{1} - \mathcal{B}$, where $\mathbbm{1}$ is the identity operator and $\mathcal{B}$ the Bell operator corresponding to expression (\ref{Bellcorr}) (see, e.g., \cite{bamps2015,doherty}). For any positive semidefinite operator $\mathcal{P}$, an SOS decomposition is a collection of operators ${P_{\lambda}}$ such that
$\mathcal{P} = \sum_{\lambda} P_{\lambda}^{\dagger}  P_{\lambda}.$
If $\widetilde{\mathcal{B}}$ admits the latter form it must be positive semidefinite, implying that $\widetilde{Q}_{b}$ upper bounds our Bell expression, i.e., $\langle \psi | \mathcal{B} | \psi \rangle \leq \widetilde{Q}_{b}$ for any $\ket{\psi}$. 
%

To show that $\widetilde{Q}_{b} = m(d-1)$ is indeed the Tsirelson bound of $\widetilde{I}_{d,m}$, 
we prove that $\widetilde{Q}_{b}\mathbbm{1}-\mathcal{B}$ decomposes as
\begin{equation}\label{sos}
\widetilde{Q}_{b}\mathbbm{1}-\mathcal{B}  = \frac{1}{2}\sum_{i=1}^{m}\sum_{k=1}^{d-1} P_{ik}^{\dagger} P_{ik} +  \frac{1}{2}\sum_{i=1}^{m - 2}\sum_{k=1}^{d-1} T_{ik}^{\dagger} T_{ik},
\end{equation}
where 
$P_{ik} = \mathbbm{1} \otimes \bar{B}_i^k - (A_i^{k})^{\dagger} \otimes \mathbbm{1}$, and $T_{ik} = ( \mu_{i,k} B_{2}^{d - k} + \nu_{i,k} B_{i + 2}^{d - k} + \tau_{i,k} B_{i + 3}^{d - k})$ with $\mu_{i,k}$, $\nu_{i,k}$, $\tau_{i,k} \in \mathbb{R}$. The Bell operator reads $\mathcal{B} = \sum_{i=1}^m\sum_{l=1}^{d-1} A_i^k\otimes\bar{B}_i^k$, and the decomposition is independent of the choice of $A_i^k$ and $B_i^k$. 
The exact values of the coefficients along with details on the SOS decomposition can be found in Appendix \ref{appendixtsirelson}. 
\end{proof}
A few remarks are in order. First, it is not difficult to see that $\widetilde{Q}_b>\widetilde{C}_b$ for any $m,d\geq 2$, meaning that 
all our Bell inequalities are nontrivial (cf. Appendix \ref{appendixscalings}). Second, let us elaborate on how the SOS works in the case of two measurements, $m=2$, which justifies
the choice of conditions (\ref{conditions}). For $m=2$, the second part of the SOS decomposition \eqref{sos} vanishes. 
For the optimal CGLMP measurements both sides of (\ref{sos}) must yield zero when applied to $\ket{\psi^+_d}$, which stems from conditions (\ref{Raimat}) and (\ref{conditions}). 
This allows one to grasp the intuition behind conditions (\ref{conditions}), i.e., they allow one to construct in a quite direct way
an SOS decomposition (\ref{sos}), in which all operators $P_{ik}$ are polynomials of the measurement operators $A_{i}^{k}$ and $B_{i}^{k}$ of order one, significantly facilitating the computation of the Tsirelson bound.
For the CHSH Bell inequality, one observes the same effect, as these same properties of the optimal state and measurements allow 
the Bell operator $\mathcal{B}_{\text{CHSH}}  = A_1 \otimes B_1 + A_1 \otimes B_2 + A_2 \otimes B_1 - A_2 \otimes B_2$ to have the decomposition:
%
$2\sqrt{2}\mathbbm{1} - \mathcal{B}_{\text{CHSH}} = ( P_{1}^{\dagger}P_1 + P_{2}^{\dagger}P_2)/\sqrt{2},$
%
with $P_1 = (1/\sqrt{2})\mathbbm{1}\otimes (B_1 +B_2) - A_1 \otimes \mathbbm{1}$, and $P_2 = (1/\sqrt{2})\mathbbm{1} \otimes(B_1 - B_2) - A_2 \otimes \mathbbm{1} $. Thus, our construction generalizes this quantum aspect of the CHSH Bell operator. For larger number of measurements, $m>2$, the first part of the SOS decomposition is not enough and one has to add 
``by hand'' the extra term in which all $ T_{ik}$'s are also of order one in $B_i^k$.

Note that for two measurements, our Bell expressions coincide with those introduced in \cite{wonminson} and then rederived in \cite{devicente2015} using a different approach. Moreover, the Tsirelson bounds of these Bell inequalities was computed in Refs.
\cite{lee2007,devicente2015} exploiting other techniques, and it was proven in \cite{lee2007} that they
are not tight. On the other hand, for $d=2$ and any $m$, our class recovers the 
well-known chained Bell inequalities \cite{chained}. We finally notice that the alternative 
generalization of the CHSH Bell inequality to three measurements and outcomes given in \cite{BuhrmanMassar} 
was also found to be maximally violated by $\ket{\psi^+_3}$ \cite{Liang}.

Let us eventually compute the no-signalling bound of our Bell expressions.

\begin{thm}\label{theons}
The no-signalling bound of $\widetilde{I}_{d,m}$ is given by $\widetilde{NS}_{b} = m \tan \left(\pi/2m\right) g(0) - m$.
\end{thm}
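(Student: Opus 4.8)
The plan is to compute $\widetilde{NS}_b = \max_{\vec p \in \mathcal{N}} \widetilde{I}_{d,m}$ by exploiting the fact that the no-signalling set $\mathcal{N}$ is a polytope, so the maximum is attained at a vertex. The key structural fact to use is that the maximal no-signalling correlations are precisely those for which all the ``positive'' probability terms $P(A_x = B_y + z)$ appearing with coefficients $\alpha_k$ can be simultaneously pushed to their extremal values while respecting no-signalling, and the ``negative'' terms multiplied by $\beta_k$ can be driven to zero. Concretely, I would work with the probability form $I_{d,m}$ from \eqref{Bellproba} rather than the correlator form, since the no-signalling constraints and the positivity/normalisation constraints are most transparent there.

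First I would write down the linear program defining $\widetilde{NS}_b$: maximise $I_{d,m}$ subject to positivity ($P(A_x=a,B_y=b)\ge 0$), normalisation, and the no-signalling conditions (marginals $P(A_x=a)$ and $P(B_y=b)$ are well-defined independently of the other party's setting). Because the Bell expression is built only from the quantities $P(A_x = B_y + z) = \sum_j P(A_x=j+z, B_y=j)$, I would argue that the problem reduces to optimising over the reduced ``difference'' distributions $p_{xy}(z) := P(A_x = B_y + z)$, which for fixed $x,y$ form a probability vector indexed by $z \in \{0,\dots,d-1\}$. The no-signalling structure then constrains how these difference distributions on overlapping pairs of settings can be chosen jointly. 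The natural candidate optimiser is the one that concentrates all weight so that each $\mathbbm{P}_k$ term is maximised and each $\mathbbm{Q}_k$ term vanishes; evaluating $I_{d,m}$ on this candidate and substituting the coefficients \eqref{alpha}, \eqref{beta} should telescope into the closed form $m\tan(\pi/2m)\,g(0) - m$.

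The main obstacle I anticipate is establishing optimality rather than merely exhibiting a feasible point achieving the stated value. That is, I would need to certify that no no-signalling box does better. The cleanest route is a duality argument: produce a feasible solution to the dual LP (i.e.\ nonnegative multipliers for the constraints) whose objective matches the primal value, thereby proving the bound is tight. Alternatively, one can argue combinatorially that any assignment of the difference distributions $p_{xy}(z)$ consistent with no-signalling cannot simultaneously exceed the weighted sum of the largest coefficient $\alpha_0$ on more terms than the candidate allows, because the no-signalling marginals force a trade-off across the cyclic chain of settings $A_1,B_1,A_2,\dots,A_m,B_m$ closed by the convention $A_{m+1}=A_1+1$. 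I expect the cyclic shift in this convention to be exactly what prevents all terms from being saturated at once, and pinning down this trade-off quantitatively is where the real work lies.

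Once optimality is secured, the remaining step is purely computational: insert the explicit $\alpha_k,\beta_k$ into the value attained by the optimal box and simplify using the definition $g(x)=\cot(\pi(x+1/2m)/d)$, together with standard cotangent summation identities, to reach $\widetilde{NS}_b = m\tan(\pi/2m)\,g(0) - m$. As a consistency check I would verify the result against the classical bound of Theorem~\ref{theoclass} and confirm the strict ordering $\widetilde{C}_b < \widetilde{Q}_b < \widetilde{NS}_b$, and specialise to $d=2$ (chained inequalities) and to $m=2$ to match the known values, which gives confidence that the coefficient algebra has been carried out correctly.
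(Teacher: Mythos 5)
Your framework---work in the probability form and treat $\widetilde{NS}_b$ as a linear program over the no-signalling polytope---is the right starting point, but the proposal misplaces the content of the proof, and the step you single out as ``where the real work lies'' rests on a false expectation. You anticipate that the cyclic convention $A_{m+1}=A_1+1$ forces a quantitative trade-off that prevents all terms from being saturated at once by a no-signalling box. It does not: that obstruction exists only for local deterministic strategies, which is exactly why $\widetilde{C}_b<\widetilde{NS}_b$. The paper's argument is far shorter than what you outline. Using $\alpha_k=-\beta_{d-k-1}$, one rewrites the expression as $I_{d,m}=\sum_{k=0}^{d-1}\alpha_k\sum_{i=1}^m\left[P(A_i=B_i+k)+P(B_i=A_{i+1}+k)\right]$; since the extended coefficients are strictly decreasing in $k$ (monotonicity of the cotangent), each of the $2m$ difference distributions $p_{xy}(\cdot)$ contributes at most $\alpha_0$ for \emph{any} normalized probability assignment whatsoever, signalling or not. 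Hence $I_{d,m}\le 2m\alpha_0$ is the algebraic bound, and no dual LP certificate or combinatorial chain analysis is needed: optimality of the candidate is automatic once feasibility is established. Note also that your candidate optimizer as stated---``each $\mathbbm{P}_k$ term is maximised and each $\mathbbm{Q}_k$ term vanishes''---is not even well-defined, because for fixed settings the events $A_x=B_y+k$ for different $k$ partition the sample space; the correct candidate puts all weight on $k=0$, so that only the $\alpha_0$ block survives.

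The genuine gap is that the one step which does require work is precisely what you assert as a ``key structural fact'' without proof: that a no-signalling box exists achieving $P(A_i=B_i)=P(B_i=A_{i+1})=1$ for all $i$, including the shifted condition $P(B_m=A_1+1)=1$. The paper exhibits it explicitly: for the setting pairs appearing in the expression, $P(A_x=a,B_y=b)=\delta_{a,b}/d$ (with $P(A_1=a,B_m=b)=\delta_{a,b-1}/d$ for the special pair), and $P(A_x=a,B_y=b)=1/d^2$ for all pairs not appearing; every marginal is uniform, so the no-signalling constraints hold, and the box attains $2m\alpha_0$. Your closing computation is then correct in spirit---inserting $\alpha_0$ and the relation $\widetilde{I}_{d,m}=dI_{d,m}-2mS$ gives $\widetilde{NS}_b=m\tan\left(\pi/2m\right)g(0)-m$, with no cotangent summation identities required. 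So the proposal can be repaired, but as written it omits the construction that constitutes the proof while investing effort in an optimality argument that turns out to be trivial.
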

\begin{proof}
We provide no-signalling correlations $\vec{p}$ and show that they attain the algebraic bound of $I_{d,m}$. 
They correspond to having all the probabilities which are multiplied by $\alpha_0$ in $I_{d,m}$ equal to one, and all the others equal to zero (see Appendix \ref{appendixns}).
\end{proof}

Again, it is not difficult to see that $\widetilde{NS}_b>\widetilde{Q}_b$ for any $m,d\geq 2$
(see Appendix \ref{appendixscalings} for the proof and scalings of $\widetilde{C}_b$, $\widetilde{Q}_b$
and $\widetilde{NS}_b$ with $m$ and $d$).

\section{Applications to device-independent protocols}\label{secappli}
A natural application for our Bell inequalities is self-testing---a DI protocol in which a state and measurements performed on it are certified up to local isometries, based on the nonlocal correlations they produce. To perform self-testing, the correlations $\vec{p}$ maximally violating the given Bell inequality must be unique, i.e., attained, up to local isometries, by certain state and measurements. This is generally hard to prove. There exists, however, a numerical method for self-testing \cite{swaptest1}. We applied it to the simplest case $m=2$ and $d = 3$, and the results are plotted in Figure \ref{swap3}.
It shows that one can self-test the maximally entangled state of two qutrits $\ket{\psi^+_3} =(\ket{00} + \ket{11} + \ket{22})/\sqrt{3}$ with our inequalities.

\begin{figure}
\includegraphics[width=0.4\textwidth]{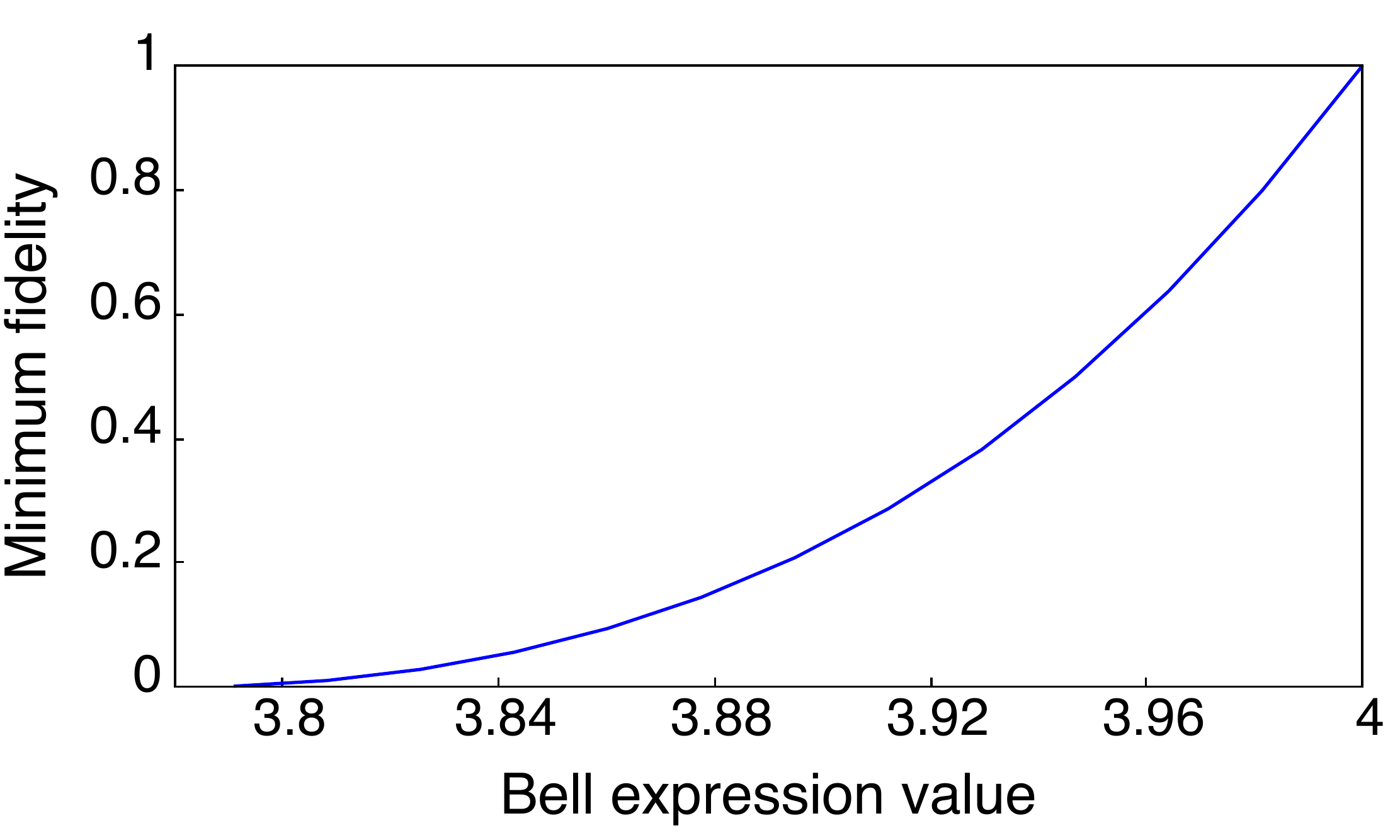}
\caption{Minimum fidelity of the state in the black box to the maximally entangled state of two qutrits, as a function of the violation of $\widetilde{I}_{3,2}$. At the maximal violation $4$, the fidelity is equal to $1$, meaning that the quantum state used in the Bell experiment must be maximally entangled. The numerical method that we used does not yield a positive lower bound on the fidelity below $ \widetilde{I}_{3,2}  \approx 3.79$ (for comparison, the classical bound is $\widetilde{I}_{3,2}  \leq (1 + 3\sqrt{3})/2 \approx 3.01$).\label{swap3}}
\end{figure}

An open question is whether one can generalize this result to any dimension. Our inequalities could then be applied in DI random number generation protocols \cite{colbeck,
pironionature,pironiomassar}. Indeed, if $\vec{p}$ maximally violating $\widetilde{I}_{d,m}$ is unique, one can apply the method of \cite{dhara2013} and use the symmetries of the Bell expressions to guarantee a dit of perfect randomness. This, by increasing the dimension $d$, would result in unbounded randomness expansion.

Our inequalities could also find applications in DI quantum key distribution. An advantage that our inequalities have over CGLMP in that scenario \cite{marcusmarcin} is that, as said before, the maximal violation is obtained for the maximally entangled state. This state can produce perfect correlations between the users, which reduces the error-correcting phase of the protocol and can lead to better key generation rates. We study this question in Appendix \ref{diqkd}. Numerically, we find that for $m=2$ and $d=3$, our inequalities lead to higher key rates than CGLMP for levels of white noise up to $4.2$ percent. While this advantage is not very large, we believe it grows with the dimension of the systems, at least in the noiseless case \cite{zohrengill}. 
Moreover, it is known that maximally entangled states are much simpler to prepare experimentally than fine-tuned partially entangled states. It would be interesting to confirm these conjectures in a future work focused on DIQKD.

\section{Conclusions}
In this work, we introduced a new technique allowing to construct Bell inequalities with arbitrary numbers of measurements and outcomes that are maximally violated by the maximally entangled states. It
exploits the SOS decompositions of Bell operators and, crucially, 
allows one to compute analytically their Tsirelson bounds.  
Our results are general as, unlike previous works, we do not consider a particular Bell scenario, but allow for arbitrary number of measurements $m$ and outcomes $d$. Our inequalities can be seen as the ``quantum'' or the DI-oriented generalization of CHSH Bell inequality, in the same spirit as the CGLMP inequality generalizes the CHSH one classically. 

\begin{acknowledgments}
We wish to thank Y.-C. Liang, M. Navascu\'{e}s, T. V\'{e}rtesi and J. Kaniewski for fruitful discussions, and especially J.-D. Bancal for sharing with us his code. This work was supported ERC CoG QITBOX and AdG OSYRIS, the AXA Chair in Quantum Information Science, Spanish MINECO (FOQUS FIS2013-46768-P, SEV-2015-0522,  QIBEQI FIS2016-80773-P and  FISICATEAMO FIS2016-79508-P), Fundaci\'o Privada Cellex, the Generalitat de Catalunya (SGR 874, SGR 875  and the CERCA programme), the EU projects QALGO and SIQS, and the John Templeton Foundation. This project has received funding from the European Union’s Horizon 2020
research and innovation programme under the Marie Sklodowska-Curie grant
agreements No 705109 and No 748549. We acknowledge financial support from the Fondation Wiener-Anspach and the Interuniversity Attraction Poles program of the Belgian Science
Policy Office under the grant IAP P7-35 photonics@be. J. T. acknowledges the CELLEX-ICFO-MPQ programme. S. P. is a Research Associate of the Fonds de la Recherche Scientifique F.R.S.-FNRS (Belgium). 
\end{acknowledgments}

\onecolumngrid
\appendix

\section{Optimal CGLMP measurements}\label{appendixmeasurements}
We present here the ``optimal CGLMP measurements'' first introduced in \cite{kaszlikowski} and generalized to an arbitrary number of inputs in \cite{barrett2006}, as we use them throughout our work. They are defined as follows 
\begin{equation}
A_{x} = U_{x}^{\dagger}F \Omega F^{\dagger} U_{x}, \qquad \qquad B_{y} = V_{y}F^{\dagger} \Omega F V_{y}^{\dagger},
\label{cglmpmeas}
\end{equation}
where $\Omega=\mathrm{diag}[1,\omega,\omega^2,\ldots,\omega^{d-1}]$, with $\omega = \text{exp}(2\pi i/d)$, and $F$ is the $d\times d$ discrete Fourier transform matrix given by
\begin{equation}\label{AndresSegovia}
F_d=\frac{1}{\sqrt{d}}\sum_{i,j=0}^{d-1}\omega^{ij}\ket{i}\!\bra{j}.
\end{equation}
Then, $U_x$ and $V_x$ are unitary operations defining Alice's and Bob's measurements
and read explicitly
\begin{equation}\label{Alhambra}
U_x=\sum_{j=0}^{d-1}\omega^{j \theta_x}\proj{j},\qquad V_y=\sum_{j=0}^{d-1}\omega^{j \zeta_y}\proj{j}
\end{equation}
with the phases $\theta_{x} = (x - 1/2)/m$ and $\zeta_{y} = y/m$ for $x, y = 1, \ldots, m$.

When applying these measurements on a normalised state of the form $|\psi \rangle = \sum_{q = 0}^{d-1} \gamma_q |qq\rangle$, we obtain the probabilities
\begin{equation}\label{probasmeas}
P(A_x = a, B_y = b) = \left| \frac{1}{d} \sum_{q = 0}^{d-1} \gamma_q \exp \left( \frac{2\pi i}{d} q (a - b - \theta_{x} + \zeta_{y}) \right) \right|^{2}.
\end{equation}
One can observe that this depends only on the difference $k = a -b$ and not on $a$ and $b$ separately. This means that:
\begin{equation}
P(A_x = B_y + k) = d P(A_x = k, B_y = 0).
\end{equation}
Thus, all the terms $P(A_x = B_y + k)$ computed for those measurements and state have identical subterms $P(A_x = b + k, B_y = b)$. Moreover, using the values of the phases $\theta_{x}$ and $\zeta_{y}$, one can verify straightforwardly that expression (\ref{probasmeas}) has the same value if $x = y$ and $a -b = k$, and if $x = y + 1$ and $a -b =- k$. Thus :
\begin{equation}
P(A_i = B_i + k) = P(B_i  = A_{i+1} + k),
\end{equation}
for $i = 1, \ldots, m$. Note that if one wishes to write $A_{m+1} = A_{1}$, the symmetry is not valid anymore and requires the definition  $A_{m+1} = A_{1} + 1$, which we adopt. To sum up, all the $ \mathbbm{P}_k$ and $\mathbbm{Q}_k$ from our class of inequalities have identical subterms for those state and optimal CGLMP measurements (in particular the state can be the maximally entangled state). These symmetries justify the form of our Bell expressions: terms who have the same value appear with the same coefficient $\alpha_k$ or $\beta_k$, thus forming ``blocks''. Different blocks have different values and are multiplied by different coefficients.

\section{Derivation of coefficients $\alpha_k$ and $\beta_k$ }\label{appendixcoefficients}
We present the details on the derivation of coefficients $\alpha_k$ and $\beta_k$ whose value is stated in Section \ref{secclass} of the main text. The departure point of the determination of $\alpha_k$ and $\beta_k$ is the set of matrix conditions
\begin{equation}\label{conditions2}
\bar{B}_{i}^{l} = (A_{i}^{l})^{*}
\end{equation}
with $i=1,\ldots,m$, and $l=1,\ldots,\lfloor d/2\rfloor$. This number $\lfloor d/2\rfloor$ of equations stems from the fact that 
\begin{eqnarray}
A_x^{d-l}=(A_x^l)^{\dagger}, \label{aproperties} \\
\bar{B}_y^{d-l}=(\bar{B}_y^l)^{\dagger}. \label{bproperties}
\end{eqnarray}
Recall that the barred quantities $\bar{B}_i^l$ are defined as
\begin{equation}\label{Bbar}
\bar{B}_i^l=a_lB_{i}^{d-l}+a_l^* B_{i-1}^{d-l}
\end{equation}
for $i=2,\ldots,m$ and $\bar{B}_1^l=a_lB_{1}^{d-l}+a_l^* \omega^lB_{m}^{d-l}$, 
and the numbers $a_l$ are given by 
\begin{equation}\label{system2}
a_l=\sum_{k=0}^{\lfloor d/2\rfloor-1}\left[\alpha_k\omega^{-kl}-\beta_k\omega^{(k+1)l}\right].
\end{equation}
Notice that $a_l = a_{d-l}^{*}$. Let us notice in passing that the properties (\ref{aproperties}) and (\ref{bproperties}) imply that the Bell expression we consider, i.e., 
\begin{equation}\label{delicje}
\widetilde{I}_{d,m}=\sum_{i=1}^m\sum_{l=1}^{d-1}\langle A_i^l \bar{B}_i^l\rangle
\end{equation}
is real. This is because the sum in (\ref{delicje}) can be split into 
two sums: for $l=1,\ldots, \lfloor d/2\rfloor$ and $l=\lfloor d/2\rfloor+1,\ldots,d-1$
for odd $d$, and for $l=1,\ldots,d/2-1$ and $l=d/2+1,\ldots,d-1$ (plus a single term corresponding to $l=d/2$ which is always real) for even $d$. Now, due to Eqs. (\ref{aproperties}) and (\ref{bproperties}) one realizes that all terms in the second sum are complex conjugations 
of those in the first sum.
%

In order to solve the system (\ref{conditions2}) one has to find explicit forms of $A_x^l$ and $B_y^l$. Introducing Eqs. (\ref{AndresSegovia}) and (\ref{Alhambra}) into Eq. (\ref{cglmpmeas}),
one obtains
\begin{equation}\label{Axl}
A_x^l=\omega^{-(d-l)\theta_x}\sum_{n=0}^{l-1}\ket{d-l+n}\!\bra{n}+\omega^{l\theta_x}\sum_{n=l}^{d-1}\ket{n-l}\!\bra{n}
\end{equation}
%
%
and
\begin{equation}\label{Byl}
B_y^l=\omega^{-(d-l)\zeta_y}\sum_{n=0}^{l-1}\ket{n}\!\bra{d-l+n}
+\omega^{l\zeta_y}\sum_{n=l}^{d-1}\ket{n}\!\bra{n-l}.
\end{equation}
%
%
%
Then, one combines these formulas with equations (\ref{Bbar}) and 
(\ref{conditions2}), and compares the matrix elements, which yields
the following system of equations
\begin{eqnarray}
 a_l\omega^{-l\zeta_i}+a_l^{*}\omega^{-l\zeta_{i-1}}&\!=\!&\omega^{-l\theta_i}
\nonumber\\
 a_l\omega^{(d-l)\zeta_i}+a_l^{*}\omega^{(d-l)\zeta_{i-1}}&\!=\!&\omega^{
(d-l)\theta_i },
\label{system}
\end{eqnarray}
with $i=1,\ldots,m$ and $l=1,\ldots,\lfloor d/2\rfloor$, where it is assumed that $\zeta_0=0$.
Simple algebra implies finally that
%
\begin{equation}
a_l=\frac{\omega^{\frac{2l-d}{4m}}}{2\cos(\pi/2m)}\qquad (l=1,\ldots,\lfloor d/2\rfloor).
\end{equation}

Having determined $a_l$, one can turn to the system (\ref{system2}). It consists of $\lfloor d/2\rfloor$ equations containing $2\lfloor d/2\rfloor$ variables, meaning that it cannot be uniquely solved, and, in particular, the solutions will be generally complex.
To handle the latter problem we equip this system with $\lfloor d/2\rfloor$ additional equations
\begin{equation}\label{system3}
\sum_{k=0}^{\lfloor d/2\rfloor-1}\left[\alpha_k\omega^{kl}-\beta_k\omega^{-(k+1)l}\right]=a_l^*.
\end{equation}
for $l=1,\ldots,\lfloor d/2\rfloor$.
Now, both systems (\ref{system2}) and (\ref{system3}) can be condensed into the following single one 
\begin{equation}\label{system4}
\sum_{k=0}^{\lfloor d/2\rfloor-1}\left[\alpha_k\omega^{-kl}-\beta_k\omega^{(k+1)l}\right]=c_l,
\end{equation}
in which $c_l=a_l$ for $l=1,\ldots,\lfloor d/2\rfloor$
and $c_{l}=c_{-l}^{*}$ for $l=-\lfloor d/2\rfloor,\ldots,-1$. In what follows we solve
(\ref{system3}) for even and odd $d$ separately.

\paragraph{Odd $d$.} We begin by noting that in this case, 
the system (\ref{system4}) consists of $d-1$ equations and involves the same number of
variables, and therefore one expects it to have a unique solution. To find it, we
denote the set $I:=\{-(d-1)/2,\ldots,-1,1,\ldots,(d-1)/2\}$ and note that 
for any pair $k,n\in \{0,\ldots,\lfloor d/2\rfloor-1\}$, the following identity holds:
\begin{equation}\label{GorzkaZoladkowa}
\sum_{l\in I}\omega^{-lk}\omega^{ln} =\sum_{l\in I\cup\{0\}}\omega^{-lk}\omega^{ln} - 1= d\delta_{n,k}-1.
\end{equation}
We then multiply (\ref{system4}) by $\omega^{nl}$ for some $n\in\{0,\ldots,\lfloor d/2\rfloor-1\}$ and add the resulting equations over $l\in I$, which by virtue of Eq. (\ref{GorzkaZoladkowa}) gives
\begin{equation}\label{alphan}
\alpha_n=\frac{1}{d}S+\frac{1}{d}\sum_{l\in I}c_l\omega^{nl}\qquad (n=0,\ldots,\lfloor d/2\rfloor-1),
\end{equation}
where we have denoted
\begin{equation}\label{S}
S=\sum_{k=0}^{\lfloor d/2\rfloor-1}(\alpha_k-\beta_k).
\end{equation}
The coefficients $\beta_n$ can be determined in an analogous way and we obtain:
%
\begin{equation}\label{betan}
\beta_n=-\frac{1}{d}S-\frac{1}{d}\sum_{l\in I}c_l\omega^{-(n+1)l} \qquad (n=0,\ldots,\lfloor d/2\rfloor-1).
\end{equation}

To fully determine $\alpha_n$ and $\beta_n$,
it is in fact enough to compute the sum in Eq. (\ref{alphan}) as the second one and $S$ can be obtained from it by replacing
$n$ by $-(n+1)$ and $\lfloor d/2\rfloor$, respectively. To compute this sum, we first express it as
\begin{eqnarray}\label{Sonda}
\sum_{l\in I}c_l\omega^{nl}&=&\frac{1}{\cos(\pi/2m)}\sum_{l=1}^{\lfloor d/2\rfloor}\mathrm{Re}
\left(\omega^{(2l-d)/4m}\omega^{nl}\right)\nonumber\\
&=&\frac{1}{\cos(\pi/2m)}\left[\cos\left(\frac{\pi}{2m}\right)\sum_{l=1}^{\lfloor d/2\rfloor}\cos\left(\frac{2\pi l}{d}\xi\right)+\sin\left(\frac{\pi}{2m}\right)\sum_{l=1}^{\lfloor d/2\rfloor}\sin\left(\frac{2\pi l}{d}\xi\right)\right]
\end{eqnarray}
where we have denoted $\xi=n+1/2m$. Using the Euler representations of the
cosine and sine functions the above two sums can be easily computed and they read
\begin{equation}
\sum_{l=1}^{\lfloor d/2\rfloor}\cos\left(\frac{2\pi l}{d}\xi\right)=\frac{1}{2}\left[\frac{\sin(\pi\xi)}{\sin(\pi\xi/d)}-1\right]
\end{equation}
and
\begin{equation}
\sum_{l=1}^{\lfloor d/2\rfloor}\sin\left(\frac{2\pi l}{d}\xi\right)=
\frac{1}{2}\left[\cot\left(\frac{\pi\xi}{d}\right)-\frac{\cos(\pi\xi)}{\sin(\pi\xi/d)}\right].
\end{equation}
Introducing them into Eq. (\ref{Sonda}) and with the aid of some trigonometric formulas, one obtains
\begin{eqnarray}\label{rownanie}
\sum_{l\in I}c_l\omega^{nl}&=&\frac{1}{2}\left\{\frac{\sin(\pi \xi)}{\sin(\pi\xi/d)}-1+
\tan\left(\frac{\pi}{2m}\right)\left[\cot\left(\frac{\pi\xi}{d}\right)-\frac{\cos(\pi\xi)}{\sin(\pi \xi/d)}\right]\right\}\nonumber\\
&=&\frac{1}{2}\left\{\tan\left(\frac{\pi}{2m}\right)\cot\left[\frac{\pi}{d}\left(n+\frac{1}{2m}\right)\right]-1\right\}.
\end{eqnarray}
By replacing $n$ with $-(n+1)$ in the above formula we then arrive at the expression for the sum in Eq. (\ref{betan}), that is, 
\begin{equation}\label{rownanie2}
\sum_{l\in I}c_l\omega^{-(n+1)l}=-\frac{1}{2}\left\{\tan\left(\frac{\pi}{2m}\right)\cot\left[\frac{\pi}{d}\left(n+1-\frac{1}{2m}\right)\right]+1\right\}.
\end{equation}
Finally, setting $n=\lfloor d/2\rfloor=(d-1)/2$ in Eq. (\ref{rownanie}) one obtains a formula for $S$:
\begin{equation}\label{rownanie3}
S=\frac{1}{2}\left\{1-\tan\left(\frac{\pi}{2m}\right)\cot\left[\frac{\pi}{d}\left(\left\lfloor \frac{d}{2} \right\rfloor+\frac{1}{m}\right)\right]\right\}.
\end{equation}
Substituting Eqs. (\ref{rownanie}), (\ref{rownanie2}), and (\ref{rownanie3})
into Eqs. (\ref{alphan}) and (\ref{betan}), we eventually obtain the coefficients $\alpha_n$
and $\beta_n$ in the following form
\begin{equation}\label{alphan2}
\alpha_n=\frac{1}{2d}\tan\left(\frac{\pi}{2m}\right)\left\{\cot\left[\frac{\pi}{d}\left(n+\frac{1}{2m}\right)\right]-\cot\left[\frac{\pi}{d}\left(\left\lfloor\frac{d}{2}\right\rfloor+\frac{1}{2m}\right)\right]\right\}
\end{equation}
and
\begin{equation}\label{betan2}
\beta_n=\frac{1}{2d}\tan\left(\frac{\pi}{2m}\right)\left\{\cot\left[\frac{\pi}{d}\left(n+1-\frac{1}{2m}\right)\right]+\cot\left[\frac{\pi}{d}\left(\left\lfloor\frac{d}{2}\right\rfloor+\frac{1}{2m}\right)\right]\right\}.
\end{equation}
with $n=1,\ldots,\lfloor d/2\rfloor$. As in the main text, the coefficients can be expressed using the function $g(x) := \cot(\frac{\pi}{d}(x+ \frac{1}{2m}))$.

\paragraph{Even $d$.} Clearly, in the case of even $d$, one can solve the system 
(\ref{system4}) analogously. The difference is, however, that (\ref{system4}) is the same equation for $l=-d/2$ and $l=d/2$, and therefore the system consists of $d-1$ equations for $d$ variables. A non-unique solution is then expected.

Denoting $I_e=\{-(d-1)/2,\ldots,-1,1,\ldots,d/2\}$ and following the same methodology 
as above with the set $I$ replaced by $I_e$ one arrives at $\alpha_n$ and $\beta_n$ given by 
\begin{equation}
\alpha_n=\frac{1}{2d}\left\{\tan\left(\frac{\pi}{2m}\right)\cot\left[\frac{\pi}{d}\left(n+\frac{1}{2m}\right)\right]-1\right\}+\frac{1}{d}S
\end{equation}
and
\begin{equation}
\beta_n=\frac{1}{2d}\left\{\tan\left(\frac{\pi}{2m}\right)\cot\left[\frac{\pi}{d}\left(n+1-\frac{1}{2m}\right)\right]+1\right\}-\frac{1}{d}S,
\end{equation}
where $S$ is given by the same formula as in Eq. (\ref{S}).
Here, the quantity $S$ (or, equivalently, one of the variables $\alpha_n$ or $\beta_n$) cannot be uniquely determined. We fix it in such a way that the resulting $\alpha_n$ and $\beta_n$ are given by the same formulas as those in the odd $d$ case, that is, 
\begin{equation}\label{Seven}
S=\frac{1}{2}\left\{1-\tan\left(\frac{\pi}{2m}\right)\cot\left[\frac{\pi}{d}\left(
\left\lfloor\frac{d}{2}\right\rfloor+\frac{1}{2m}\right)\right]\right\}.
\end{equation}

As a consequence the coefficients $\alpha_n$
and $\beta_n$ are given by Eqs. (\ref{alphan2}) and (\ref{betan2}), both in the odd and even $d$ cases.

%
It is finally worth mentioning that the values of the two Bell expressions---in terms of probabilities $I_{d,m}$ and in terms of generalized correlators $\widetilde{I}_{d,m}$ ---are related in the following way:
\begin{equation}\label{relation}
\widetilde{I}_{d,m} = d I_{d,m} -2m S,
%
\end{equation}
where $S$ is given by equation (\ref{rownanie3}).

\paragraph{Special cases.} Let us now consider two special cases of $d=2$ and any $m$, and $m=2$ and any $d$. In the first one, 
the Bell expression in the probability form simplifies to
\begin{equation}
I_{2,m}=\alpha_0\mathbbm{P}_0-\beta_{0}\mathbbm{Q}_0
\end{equation} 
where
\begin{equation}
\mathbbm{P}_0=\sum_{i=1}^m[P(A_i=B_i)+P(B_i=A_{i+1})],\qquad 
\mathbbm{Q}_0=\sum_{i=1}^m[P(A_i=B_i-1)+P(B_i=A_{i+1}-1)]
\end{equation}
and
\begin{equation}
\alpha_0=\frac{1}{2\cos(\pi/2m)},\qquad \beta_0=0.
\end{equation}

Moreover, there is a unique coefficient $a_1$ and it simplifies to $1/[2\cos(\pi/2m)]$, so that in the correlator form our Bell expression for $d=2$ becomes
\begin{equation}
\widetilde{I}_{2,m}=\frac{1}{2\cos(\pi/2m)}\left[\langle A_1B_1\rangle-\langle A_1B_m\rangle+\sum_{i=2}^m\left(\langle A_iB_i\rangle+\langle A_iB_{i-1}\rangle\right)\right],
\end{equation}
and Theorems \ref{theoclass}, \ref{theoquantum} and \ref{theons} from the main text give $\widetilde{C}_b=(m-1)/\cos[\pi/2m]$, $\widetilde{Q}_b=m$, and $\widetilde{NS}_b=m/\cos[\pi/2m]$, respectively. This is the well-known chained Bell inequality \cite{chained}, which was recently used in Ref. \cite{supic16} to self-test the maximally entangled state of two qubits and the corresponding measurements. 

In the second case, i.e., that of $m=2$ and any $d$, the Bell expression $I_{d,2}$
in the probability form is given by Eq. 
\begin{equation}\label{Bellproba_app}
I_{d,2}:=\sum_{k=0}^{\left \lfloor d/2\right \rfloor-1}\left(
\alpha_k \mathbbm{P}_k -\beta_k \mathbbm{Q}_k \right ),
\end{equation}
with the expressions $\mathbbm{P}_k$ and $\mathbbm{Q}_k$ simplifying to
\begin{equation}
\mathbbm{P}_k=P(A_1=B_1+k)+P(B_1=A_{2}+k)+P(A_2=B_2+k)+P(B_2=A_{1}+k+1)
\end{equation}
and
\begin{equation}
\mathbbm{Q}_k=P(A_1=B_1-k-1)+P(B_1=A_{2}-k-1)+P(A_2=B_2-k-1)+P(B_2=A_{1}-k),
\end{equation}
where we have exploited the convention that $A_3=A_1+1$. Then, the coefficients $\alpha_k$
and $\beta_k$ are given by
\begin{equation}
\alpha_k=\frac{1}{2d}\left[g(k)+(-1)^d\tan\left(\frac{\pi}{4d}\right)\right],\qquad
\beta_k=\frac{1}{2d}\left[g\left(k + 1/2 \right)-(-1)^d\tan\left(\frac{\pi}{4d}\right)\right],\qquad
\end{equation}
with $g(k)=\cot[\pi(k+1/4)/d]$. On the other hand, in the correlator form
one obtains 
\begin{equation}
\widetilde{I}_{d,2}=\sum_{l=1}^{d-1}
\left[a_l\langle A_1^l B_1^{d-l}\rangle+a_l^*\omega^l\langle A_1^l B_2^{d-l}\rangle+a_l\langle A_2^l B_2^{d-l}\rangle+a_l^*\langle A_2^l B_1^{d-l}\rangle\right],
\end{equation}
where $a_l=\omega^{(2l-d)/8}/\sqrt{2}$. In this case Theorems \ref{theoclass}, \ref{theoquantum}, and \ref{theons} give 
\begin{equation}
\widetilde{C}_b=\frac{1}{2}\left[3\cot\left(\frac{\pi}{3d}\right)-\cot\left(\frac{3\pi}{4d}\right)\right]-2,
\end{equation}
$\widetilde{Q}_b=2(d-1)$, and $\widetilde{NS}_b=2\cot[\pi/(4d)]-2$. It should be noticed that this Bell inequality previously studied in Refs. \cite{wonminson} and \cite{devicente2015}, and, in particular in Refs. \cite{devicente2015} and \cite{lee2007} the maximal quantum violation was found using two different methods.

\section{Classical bound of the inequalities}\label{appendixclassical}
We present here a detailed proof of Theorem \ref{theoclass} from the main text. Let us start with our Bell expression in the probability form $I_{d,m}$ and note that we can rewrite it as:
\begin{equation}\label{Bellproba22}
I_{d,m}:=\sum_{k=0}^{d - 1} \alpha_k \sum_{i=1}^{m}[P(A_i=B_i+k)+P(B_i=A_{i+1}+k)],
\end{equation}
with $A_{m+1}=A_1+1$. This is possible because of the form (\ref{alphan2}) and (\ref{betan2}) of coefficients $\alpha_k$ and $\beta_k$. Indeed, since $\alpha_{k} = - \beta_{d - k - 1}$, the terms of the sum which were attached to the $\beta_{k}$ coefficients can be shifted to indices $k = \lfloor d/2 \rfloor, \ldots, d - 1$ and now associated to an $\alpha_{k}$. In the odd case, we should in principle impose that the term $k = \lfloor d/2 \rfloor$ disappears, but it happens naturally since $\alpha_{\lfloor d/2 \rfloor} = 0$.

As stated in the main text, finding the classical bound of expression (\ref{Bellproba22}) reduces to computing the optimal deterministic strategy. Thus, to describe the difference between the outcomes associated to $A_x$ and $B_y$, we can assign one value $q$ such that $P(A_x = B_y + k) = \delta_{kq}$. As $q$ depends on inputs $x$ and $y$ but not all pairs of  $A_x$ and $B_y$ appear in the Bell expression, we thus define $2m$ variables $q_i \in \{0, 1 \ldots, d-1\}$ such that:
\begin{eqnarray}
A_1 - B_1 & = & q_1, \nonumber \\ 
B_1 - A_2 & = & q_2, \nonumber \\
A_2 - B_2 & = & q_3, \nonumber \\
& \vdots & \nonumber \\
A_m - B_m & = & q_{2m -1}, \nonumber \\
B_m - A_1 & = & q_{2m} + 1.
\end{eqnarray}
Due to the chained character of these equations, $q_{2m}$ must obey a superselection rule involving the other $q_i$'s, which is 
\begin{equation}
q_{2m} = -1 - \sum_{i = 1}^{2m - 1} q_i,
\end{equation}
where the sum is modulo $d$. Due to the fact that the dependence of the coefficients $\alpha_k$ on $k$ is only through the cotangent function, proving  Theorem \ref{theoclass} boils down to the following maximization problem.
%
\begin{repthm}{theoclass}
Let 
\begin{equation} 
\hat{\alpha}_k:=\cot\left[{\frac{\pi}{d}\left(k+\frac{1}{2m}\right)}\right], \nonumber
\end{equation}
and let 
\begin{equation}\label{Glacier}
\hat{C}_{b}:=\max_{0 \leq q_1, \ldots, q_{2m-1} < d}\left(\sum_{i=1}^{2m-1}\hat{\alpha}_{q_i} + \hat{\alpha}_{-1-\sum_{i=1}^{2m-1}q_i \mod d}\right).
\end{equation}
Then, $\hat{C}_{b} = (2m-1)\hat{\alpha}_0 + \hat{\alpha}_{d-1}.$
\end{repthm}
Notice that to recover the exact expression $\widetilde{C}_b$ from the main text, one needs to reintroduce the constant factors appearing in the definition of $\alpha_k$ and use Eq. (\ref{relation}). To prove the theorem, we first demonstrate two lemmas. Note that throughout this section, we assume that $m \geq 2$ and $d \geq 2$. Although these are not tight conditions to prove our results, they are in any case satisfied by the definition of a Bell test.

\begin{lem}\label{lemma2}
Let $g(x)=\cot[\pi(x+ \frac{1}{2m})/d]$. For all $x,y$ satisfying $0 \leq x<y<d-\frac{1}{2m}$, we have
\begin{equation}\label{equationlemma2}
(1+2mx)g(x) >(1+2my)g(y). 
\end{equation}
\end{lem}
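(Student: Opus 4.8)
The plan is to show that the single function $h(x) := (1+2mx)\,g(x)$ is strictly decreasing on the interval $[0,\, d - \tfrac{1}{2m})$; inequality \eqref{equationlemma2} then follows at once from $x < y$. The crucial observation is that both the prefactor $1+2mx$ and the argument of the cotangent are controlled by the \emph{same} affine function of $x$. Setting $\phi := \tfrac{\pi}{d}\bigl(x + \tfrac{1}{2m}\bigr)$, one has $1 + 2mx = 2m\bigl(x + \tfrac{1}{2m}\bigr) = \tfrac{2md}{\pi}\,\phi$, so that
\[
h(x) = \frac{2md}{\pi}\,\phi\cot\phi .
\]
Since $x \mapsto \phi$ is affine and strictly increasing, carrying $[0,\, d - \tfrac{1}{2m})$ onto $[\tfrac{\pi}{2md},\, \pi) \subset (0,\pi)$, the monotonicity of $h$ reduces to that of the one-variable function $f(\phi) := \phi\cot\phi$ on the open interval $(0,\pi)$.

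The second step is to prove that $f$ is strictly decreasing on $(0,\pi)$. Differentiating yields
\[
f'(\phi) = \cot\phi - \phi\csc^2\phi = \frac{\cos\phi\sin\phi - \phi}{\sin^2\phi} = \frac{\tfrac{1}{2}\sin(2\phi) - \phi}{\sin^2\phi} .
\]
On $(0,\pi)$ the denominator is positive, while the numerator equals $\tfrac{1}{2}\bigl(\sin(2\phi) - 2\phi\bigr)$, which is strictly negative by the elementary bound $\sin u < u$ for every $u > 0$ (applied with $u = 2\phi > 0$). Hence $f'(\phi) < 0$ throughout $(0,\pi)$, so $f$ is strictly decreasing there.

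To conclude, for $0 \le x < y < d - \tfrac{1}{2m}$ the corresponding angles obey $0 < \phi_x < \phi_y < \pi$, whence $f(\phi_x) > f(\phi_y)$ and therefore $h(x) > h(y)$, which is exactly \eqref{equationlemma2}.

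I expect no deep obstacle: the entire content is the bookkeeping that recasts the seemingly two-ingredient expression $(1+2mx)g(x)$ as a positive constant times $\phi\cot\phi$, after which strict monotonicity drops out of $\sin u < u$. The only point demanding a little care is verifying that the range of $\phi$ stays strictly inside $(0,\pi)$: the lower end is secured by $x \ge 0$ (giving $\phi \ge \tfrac{\pi}{2md} > 0$) and the upper end by the \emph{strict} bound $y < d - \tfrac{1}{2m}$ (giving $\phi_y < \pi$), which is precisely why the hypothesis is stated with that open right endpoint.
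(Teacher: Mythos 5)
Your proof is correct and follows essentially the same route as the paper's: both recast $(1+2mx)g(x)$ as a positive constant times $f(z)=z\cot z$ under the identical substitution $z=\tfrac{\pi}{d}\bigl(x+\tfrac{1}{2m}\bigr)$, check that the relevant range lies in $(0,\pi)$, and conclude from strict monotonicity of $f$ there. The only difference is the justification of that monotonicity: the paper merely sketches it via the signs of the Laurent coefficients of the holomorphic function $z\cot z$, whereas you differentiate and invoke $\sin u < u$ for $u>0$, which is more elementary and fully self-contained.
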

\begin{proof}
Let us consider the function $f(z):=z\cot z$, which is strictly decreasing in the interval $0 < z < \pi$.  This can be shown for instance by noting that $f$ is holomorphic and by studying the sign of the coefficients of its Laurent series in a ball of radius $\pi$ centered at $z = 0$. Thus, for every $c\in (0, \pi)$, $f(c)>f(z)$ for all $c<z<\pi$. In particular, we can pick $c:=\frac{\pi}{2dm}(1+2mx)$ so that:
\begin{equation}
\frac{\pi}{2dm}(1+2mx) \cot\left(\frac{\pi}{2dm}(1+2mx)\right)>zf(z),
\end{equation}
for $\frac{\pi}{2dm}(1+2mx)<z<\pi$. By introducing the change of variables $z=\frac{\pi}{2dm}(1 + 2my)$, equation (\ref{equationlemma2}) follows. Note that for integer values of $x$ and $y$, namely $k$ and $l$, Lemma \ref{lemma2} becomes:
\begin{equation}
(1+2Mk)\hat{\alpha}_k > (1+2Ml) \hat{\alpha}_l, \qquad \forall 0\leq k < l < d.
\end{equation}
\end{proof}

\begin{lem}\label{lemma3}
For integer indices $k, l, p$ such that $0< k,l < d$ and $0 \leq p < d$, we have:
\begin{equation}
\hat{\alpha}_0 + \hat{\alpha}_p > \hat{\alpha}_k + \hat{\alpha}_l. 
\end{equation}
\end{lem}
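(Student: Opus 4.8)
The plan is to collapse the two free indices $k,l$ and the free index $p$ to a single worst-case inequality using monotonicity, and then to settle that inequality with the very ingredient already used in Lemma~\ref{lemma2}, namely that $z\mapsto z\cot z$ is strictly decreasing on $(0,\pi)$. First I would record that $\hat{\alpha}_j=\cot[\frac{\pi}{d}(j+\frac{1}{2m})]$ is strictly decreasing in $j$ over $j=0,\dots,d-1$: the argument $\frac{\pi}{d}(j+\frac{1}{2m})$ increases with $j$ and always lies in $(0,\pi)$, where $\cot$ is strictly decreasing. Consequently $\hat{\alpha}_k+\hat{\alpha}_l$ over $0<k,l<d$ is maximised at $k=l=1$, while $\hat{\alpha}_0+\hat{\alpha}_p$ over $0\le p<d$ is minimised at $p=d-1$. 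Hence the entire family of inequalities is equivalent to the single statement $\hat{\alpha}_0+\hat{\alpha}_{d-1}>2\hat{\alpha}_1$, and establishing this strict gap between the extremal values proves the claim for all admissible $k,l,p$.

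Next I would put this in a convenient form. Setting $u:=\frac{\pi}{2md}$, one checks that $\hat{\alpha}_0=\cot u$, that $\hat{\alpha}_1=\cot((2m+1)u)$, and, using $\cot(\pi-\theta)=-\cot\theta$, that $\hat{\alpha}_{d-1}=-\cot((2m-1)u)$. The target inequality then reads
\begin{equation}
\cot((2m-1)u)+2\cot((2m+1)u)<\cot u . \nonumber
\end{equation}
Because $m\ge2$ and $d\ge2$ force $u\le\frac{\pi}{8}<\frac{\pi}{2}$ and $(2m+1)u\le\frac{5\pi}{8}<\pi$, all three arguments $u,(2m-1)u,(2m+1)u$ lie in $(0,\pi)$ and $\cot u>0$.

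The key step is to bound each cotangent on the left using that $f(z)=z\cot z$ is strictly decreasing on $(0,\pi)$, exactly as established in the proof of Lemma~\ref{lemma2}. For any $k>1$ with $ku<\pi$ this gives $u\cot u>ku\cot(ku)$, hence $\cot(ku)<\frac{1}{k}\cot u$. Applying this with $k=2m-1$ and $k=2m+1$ yields
\begin{equation}
\cot((2m-1)u)+2\cot((2m+1)u)<\Big(\tfrac{1}{2m-1}+\tfrac{2}{2m+1}\Big)\cot u=\frac{6m-1}{4m^2-1}\,\cot u . \nonumber
\end{equation}
Since $\cot u>0$ and $\frac{6m-1}{4m^2-1}<1$ precisely when $6m<4m^2$, i.e. $m\ge2$, the right-hand side is strictly smaller than $\cot u$, which proves the reduced inequality and therefore the lemma.

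I expect the only genuinely delicate point to be the bookkeeping in the reduction: arguing rigorously that the extremal configuration is $k=l=1$, $p=d-1$, and verifying that $(2m\pm1)u$ stay inside $(0,\pi)$ so that the sign conditions needed for the $z\cot z$ estimate and for $\cot u>0$ actually hold. Once these range checks are in place, the monotonicity of $z\cot z$ closes the argument cleanly and, pleasingly, reuses the same function that drives Lemma~\ref{lemma2}.
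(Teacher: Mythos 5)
Your proof is correct, and its skeleton matches the paper's: both arguments reduce the claim, via the ordering $\hat{\alpha}_0 > \hat{\alpha}_1 > \cdots > \hat{\alpha}_{d-1}$, to the single extremal inequality $\hat{\alpha}_0 + \hat{\alpha}_{d-1} > 2\hat{\alpha}_1$, and both settle it using the strict monotonicity of $z \mapsto z\cot z$ on $(0,\pi)$, which is exactly the content of Lemma \ref{lemma2}. The difference is in how that monotonicity is deployed. The paper chains two applications of Lemma \ref{lemma2}, first $g(0) > (2m-1)g(1-1/m)$ and then $g(1-1/m) > \frac{2m+1}{2m-1}\,g(1)$, and its closing step (``since $g(1)$ is positive, and $10/3>2$'') requires $g(1) = \hat{\alpha}_1 \geq 0$. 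You instead anchor both left-hand terms directly at $\cot u$ with $u = \pi/(2md)$, via $\cot(ku) < \frac{1}{k}\cot u$ for $k = 2m\mp1$, and only need $\cot u > 0$ together with the elementary check $\frac{1}{2m-1} + \frac{2}{2m+1} = \frac{6m-1}{4m^2-1} < 1$ for $m \geq 2$. This buys something concrete: for $d=2$ one has $\hat{\alpha}_1 = \cot\bigl[\frac{\pi}{2}\bigl(1+\frac{1}{2m}\bigr)\bigr] < 0$, so the paper's sign assumption actually fails in that case (the $d=2$ instance of the lemma is trivially true, but the paper's argument as written does not cover it), whereas your estimate is indifferent to the sign of $\cot((2m\pm1)u)$ and treats all $m,d \geq 2$ uniformly. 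Your range checks ($u \leq \pi/8$ and $(2m+1)u \leq 5\pi/8 < \pi$) are the right ones and are verified correctly, so the reduction and the final bound both go through.
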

\begin{proof}
Because all the alphas are ordered $\hat{\alpha}_0 > \hat{\alpha}_1 > \hat{\alpha}_2 > \cdots > \hat{\alpha}_{d-1}$, we have that $\hat{\alpha}_0 + \hat{\alpha}_p \geq \hat{\alpha}_0 + \hat{\alpha}_{d-1}$ and $\hat{\alpha}_1 + \hat{\alpha}_1 \geq \hat{\alpha}_k + \hat{\alpha}_l$. Hence, it suffices to prove that 
\begin{equation}\label{toprove1}
\hat{\alpha}_0 + \hat{\alpha}_{d-1} > 2 \hat{\alpha}_1. 
\end{equation}
Let us rewrite this inequality using the function $g$ introduced in Lemma \ref{lemma2}. To this end, we note that the symmetry of the function $\cot(x)=-\cot(-x)$ translates to $g(x)$ in the following manner : $g(x)=-g(-x-1/m)$. Thus, in order to prove (\ref{toprove1}), we need to show:
\begin{equation}\label{grelation}
g(0) > 2 g(1) + g(1-1/m).
\end{equation}
Using Lemma \ref{lemma2} twice, we can express that:
\begin{equation}
g(0) > (2m-1)g(1-1/m) > g(1-1/m) + 2(m-1)\frac{(1+2m)}{(2m-1)} g(1).
\end{equation}
To obtain the second inequality, one of the $2m -1$ terms was isolated, and Lemma \ref{lemma2} was applied only on the remaining $2(m-1)$ terms. The minimum of $2(m-1)(1+2m)/(2m-1)$ is found for $m=2$ and it is equal to $10/3$. Since $g(1)$ is positive, and $10/3 > 2$, we can conclude that $g(0) > g(1-1/m) + 2g(1)$, which is exactly relation (\ref{grelation}).
\end{proof}

\begin{proof}[Proof of Theorem \ref{theoclass}]
To demonstrate the theorem, we employ a dynamic programming procedure which allows us to rewrite Eq. (\ref{Glacier}) as a chain of maximizations, each over a single variable. Let us first define 
\begin{equation}
h(x):=\max_{0\leq y < d} \left(\hat{\alpha}_y + \hat{\alpha}_{-1-x-y}\right),
\end{equation}
where the indices are taken to be modulo $d$. As a direct consequence of Lemma \ref{lemma3}, $h(x)=\hat{\alpha}_0 + \hat{\alpha}_{-1-x}$. Indeed, the lemma implies that $\hat{\alpha}_0 + \hat{\alpha}_{-1-x} > \hat{\alpha}_y + \hat{\alpha}_{-1-x-y}$ if $y>0$ and $x\neq d- 1 - y$. For the cases where $y = 0$ or $x = d - 1 - y$, the maximum is directly attained.
This allows us to write the classical bound as:
\begin{equation}\label{classmax}
\hat{C}_b = \max_{q_1} \left(\hat{\alpha}_{q_1} + \max_{q_2} \left(\hat{\alpha}_{q_2} + \ldots + \max_{q_{2m-2}}\left(\hat{\alpha}_{q_{2m-2}} + h\left(\sum_{i=1}^{2m-2}q_{i}\right)\right)\ldots\right)\right).
\end{equation}
Using the properties of $h$, we find that 
\begin{equation}
\max_{q_{k}}\left[\hat{\alpha}_{q_{k}} + h\left(\sum_{i=1}^{k}q_{i}\right)\right] =\hat{\alpha}_0 + h\left(\sum_{i=1}^{k-1}q_i\right)
\end{equation}
for all $k$. By applying this step $2(m -1)$ times to expression (\ref{classmax}), we obtain:
\begin{equation}
\hat{C}_b = (2m - 2)\hat{\alpha}_0 + h(0) = (2m -1) \hat{\alpha}_0 + \hat{\alpha}_{-1}.
\end{equation}
\end{proof}

\section{Tsirelson bound of the inequalities}\label{appendixtsirelson}
Here, we present more details on the SOS decomposition of any Bell operator corresponding to our new Bell inequality $\widetilde{I}_{d,m}$, thus complementing the proof of Theorem \ref{theoquantum} from the main text. Concretely, we show that the identity 
\begin{equation}\label{sos_app}
\widetilde{Q}_{b}\mathbbm{1}-\mathcal{B}  = \frac{1}{2}\sum_{i=1}^{m}\sum_{k=1}^{d-1} P_{ik}^{\dagger} P_{ik} +  \frac{1}{2}\sum_{i=1}^{m - 2}\sum_{k=1}^{d-1} T_{ik}^{\dagger} T_{ik},
\end{equation}
is valid independently of the choice of $A_i^k$ and $B_i^k$. The operators are thus not specified. Here, $P_{ik} =\mathbbm{1}\ot\bar{B}_i^k - (A_i^k)^\dagger\ot\mathbbm{1}$, and 
\begin{equation}
T_{ik} =\mu_{i,k} B_{2}^{d - k} + \nu_{i,k} B_{i + 2}^{d - k} + 
\tau_{i,k} B_{i + 3}^{d - k},
\end{equation}
where the coefficients $\mu_{ik}$, $\nu_{ik}$ and $\tau_{ik}$ are given by 
\begin{eqnarray}
\mu_{i,k} & = & \frac{\omega^{(i + 1)(d - 2k)/2m}}{2 \cos (\pi/2m)} \frac{\sin (\pi/m)}{\sqrt{\sin (\pi i/m) \sin \left[\pi (i +1)/m\right]}}, \nonumber \\
\nu_{i,k} & = & -\frac{\omega^{(d - 2k)/2m}}{2 \cos (\pi/2m)} \sqrt{\frac{\sin \left[\pi (i +1)/m\right]}{\sin (\pi i/m)}}, \nonumber \\
\tau_{i,k} & = & \frac{1}{2 \cos (\pi/2m)} \sqrt{\frac{\sin (\pi i/m)}{ \sin \left[\pi (i +1)/m\right]}}=-\frac{\omega^{(d - 2k)/2m}}{4\cos^2(\pi/2m)}\nu_{ik}^{-1},
\label{coeffabc}
\end{eqnarray}
for $i = 1, \ldots, m -3$ and $k = 1, \ldots, d-1$, while for $i=m-2$ and $k=1,\ldots,d-1$ they are given by 
\begin{eqnarray}
\mu_{m-2,k} & = & - \frac{\omega^{-(d - 2k)/2m}}{2\sqrt{2} \cos (\pi/2m)\sqrt{\cos (\pi/m)}},\nonumber \\
\nu_{m-2,k} & = & - \frac{\omega^k \omega^{(d - 2k)/2m}}{2\sqrt{2} \cos (\pi/2m)\sqrt{\cos (\pi/m)}},\nonumber \\
\tau_{m-2,k} & = & \frac{\sqrt{\cos (\pi/m)}}{\sqrt{2}\cos (\pi/2m)}.
\label{coeffm2}
\end{eqnarray}

Now, in order to check the validity of the SOS decomposition (\ref{sos_app})
let us first introduce the explicit form of $P_{ik}$ into the first term
of the right-hand side of (\ref{sos_app}), which gives
\begin{equation}\label{Druid}
\sum_{i=1}^{m}\sum_{k=1}^{d-1} P_{ik}^{\dagger} P_{ik}=\widetilde{Q}_b\mathbbm{1}-2\mathcal{B}+\mathbbm{1}\ot \sum_{i=1}^{m}\sum_{k=1}^{d-1}(\bar{B}_i^{k})^{\dagger}(\bar{B}_i^{k}),
\end{equation}
where we have used the fact that the Bell operator $\mathcal{B}$ is Hermitian.

Let us then introduce the explicit form of the operators $T_{ik}$ into the last term of 
the right-hand side of (\ref{sos_app}), which, after some simple algebra, leads us to
\begin{eqnarray}\label{Gilgamesh}
\sum_{i=1}^{m - 2}\sum_{k=1}^{d-1} T_{ik}^{\dagger} T_{ik}&=&\sum_{i=1}^{m-2}
\sum_{k=1}^{d-1}\left(|\mu_{i,k}|^2+|\nu_{i,k}|^2+|\tau_{i,k}|^2\right)\mathbbm{1}\nonumber\\
&&+\sum_{k=1}^{d-1}\left[\mu_{1,k}^*\nu_{1,k}(B_2^{d-k})^{\dagger}(B_3^{d-k})+\mu_{1,k}\nu_{1,k}^*(B_3^{d-k})^{\dagger}(B_2^{d-k})\right]\nonumber\\
&&+\sum_{k=1}^{d-1}\left[\mu_{m-2,k}^*\tau_{m-2,k}(B_2^{d-k})^{\dagger}(B_1^{d-k})+\mu_{m-2,k}\tau_{m-2,k}^*(B_1^{d-k})^{\dagger}(B_2^{d-k})\right]\nonumber\\
&&+\sum_{i=1}^{m-3}\sum_{k=1}^{d-1}\left[(\mu_{i,k}^*\tau_{i,k}+\mu_{i+1,k}^{*}\nu_{i+1,k})(B_2^{d-k})^{\dagger}(B_{i+3}^{d-k})\right.\nonumber\\
&&\left.\hspace{2cm}+(\mu_{i,k}\tau_{i,k}^*+\mu_{i+1,k}\nu_{i+1,k}^*)(B_{i+3}^{d-k})^{\dagger}(B_2^{d-k})\right]\nonumber\\
&&+\sum_{i=1}^{m-2}\sum_{k=1}^{d-1}\left[\nu_{i,k}^*\tau_{i,k}(B_{i+2}^{d-k})^{\dagger}(B_{i+3}^{d-k})+\nu_{i,k}\tau_{i,k}^*(B_{i+3}^{d-k})^{\dagger}(B_{i+2}^{d-k})\right].
\end{eqnarray}
Now, it follows from Eqs. (\ref{coeffabc}) and (\ref{coeffm2}) that 
$\mu_{i,k}^*\tau_{i,k}+\mu_{i+1,k}^{*}\nu_{i+1,k}=0$ for $i=1,\ldots,m-3$ and $k=1,\ldots,d-1$, which means that the fourth and fifth lines in the above vanish. Then, 
one notices that $\mu_{1,k}^*\nu_{1,k}=\mu_{m-2,k}\tau_{m-2,k}^*=\nu_{i,k}^*\tau_{i,k}=-a_k^2$ for $i=1,\ldots,m-3$ and $k=1,\ldots,d-1$, and $\nu_{m-2,k}\tau_{m-2,k}^*=-\omega^k (a_k^*)^2$ for $k=1,\ldots,d-1$, where, as before, $a_k=\omega^{-(d-2k)/4m}/[2\cos(\pi/2m)]$. Therefore, the remaining terms on the right-hand side of Eq. (\ref{Gilgamesh})
can be wrapped up as
\begin{eqnarray}\label{Gilgamesh2}
\sum_{i=1}^{m - 2}\sum_{k=1}^{d-1} T_{ik}^{\dagger} T_{ik}&=&\sum_{i=1}^{m-2}
\sum_{k=1}^{d-1}\left(|\mu_{ik}|^2+|\nu_{ik}|^2+|\tau_{ik}|^2\right)\mathbbm{1}\nonumber\\
&&-\sum_{i=1}^{m-1}\sum_{k=1}^{d-1}\left[a_k^2(B_i^{d-k})^{\dagger}(B_{i+1}^{d-k})+(a_k^*)^2(B_{i+1}^{d-k})^{\dagger}(B_i^{d-k})\right]\nonumber\\
&&-\sum_{k=1}^{d-1}\left[\omega^k (a_k^*)^2(B_1^{d-k})^{\dagger}(B_{m}^{d-k})+\omega^{-k}a_k^2(B_{m}^{d-k})^{\dagger}(B_1^{d-k})\right].
\end{eqnarray}
By substituting Eqs. (\ref{Druid}) and (\ref{Gilgamesh2}) into
Eq. (\ref{sos_app}) and exploiting the explicit form of the operators 
$\bar{B}_i^k$, one obtains
\begin{eqnarray}
\frac{1}{2}\sum_{i=1}^{m}\sum_{k=1}^{d-1} P_{ik}^{\dagger} P_{ik}+
\frac{1}{2}\sum_{i=1}^{m - 2}\sum_{k=1}^{d-1} T_{ik}^{\dagger} T_{ik}&=&\frac{1}{2}\widetilde{Q}_b
\mathbbm{1}-\mathcal{B}\nonumber\\
&&+\sum_{k=1}^{d-1}\left[m|a_k|^2+\frac{1}{2}\sum_{i=1}^{m-2}\left(|\mu_{i,k}|^{2} + |\nu_{i,k}|^{2} + |\tau_{i,k}|^{2} \right) \right]\mathbbm{1}.\nonumber\\
\end{eqnarray}
It is easy to finally realize that the last two terms in the above formula amount to
$(1/2)\widetilde{Q}_b=(1/2)m(d-1)$, which completes the proof.

\section{No-signalling bound of the inequalities}\label{appendixns}
Here, we present details on the proof of Theorem \ref{theons} from the main text. As for the section on the classical bound of our inequalities, we start from the Bell expression written as:
\begin{equation}\label{Bellproba2}
I_{d,m}:=\sum_{k=0}^{d - 1} \alpha_k \sum_{i=1}^{m}[P(A_i=B_i+k)+P(B_i=A_{i+1}+k)],
\end{equation}
with $A_{m+1}=A_1+1$. Following considerations from that section, 
it is clear that the coefficient $\alpha_0$ is the largest of the sum. Thus, the algebraic bound of $I_{d,m}$ is then $2m\alpha_0$. To complete the proof, we provide a no-signalling behaviour that reaches this bound. Let us recall the no-signalling conditions for a probability distribution:
\begin{eqnarray}\label{nosigncond}
\sum_{b} P(A_x = a, B_y = b) = \sum_{b} P(A_x = a, B_{y'}= b) \qquad \forall a, x, y, y' \nonumber \\
\sum_{a} P(A_x = a, B_y = b) = \sum_{a} P(A_{x'} = a, B_y = b) \qquad \forall b, y, x, x',
\end{eqnarray}
which express that the marginals on Alice's side do not depend on Bob's input, and conversely. The behaviour that we present is the following.  For inputs $x$ and $y$ such that $x = y$ or $x = y+1$:
\begin{equation}
P(A_y = a, B_y = b) = P(A_{y+1} = a, B_y = b) =  \left\{
\begin{array}{cll}
1/d & \mathrm{if} & a = b \\
0 & \mathrm{if} & a \neq b.
\end{array}
\right.
\end{equation}
There is a special case for $x = 1$ and $y = m$:
\begin{equation}
P(A_1 = a, B_m = b) = \left\{
\begin{array}{cll}
1/d & \mathrm{if} & a = b - 1 \\
0 & \mathrm{if} & a\neq b - 1,
\end{array}
\right.
\end{equation}
where the addition is modulo $d$. For all the other input combinations (i.e. the ones not appearing in the inequalities), we have:
\begin{equation}
P(A_x = a, B_y = b) = 1/d^2 \qquad \forall a,b.
\end{equation}
One can easily verify that this distribution satisfies conditions (\ref{nosigncond}). To obtain the expression from Theorem \ref{theons}, it suffices to write explicitly $2m\alpha_0$ and to use relation (\ref{relation}).

\section{Scaling of the bounds}\label{appendixscalings}
Here, we study the asymptotic behaviour of the bounds of our Bell expressions for large numbers of inputs $m$ and outputs $d$. This can be of interest when studying applications in device-independent protocols, for instance. We also show that for any values of $m$ and $d$, the classical bound is strictly smaller than the quantum bound, which is strictly smaller than the no-signalling bound. This ensures in particular that the Bell inequality is never trivial. 

Let us start with the quantity:
\begin{equation}\label{quantumclassical}
\frac{\widetilde{Q}_b}{\widetilde{C}_b} = \frac{2m(d - 1)}{\tan\left(\frac{\pi}{2m}\right)\left[(2m - 1)\cot\left(\frac{\pi}{2dm}\right) - \cot\left(\frac{\pi}{d}(1 - \frac{1}{2m})\right)\right] - 2m}
\end{equation}
which is the ratio between the quantum and classical bounds. We also consider the ratio between the no-signalling and quantum bounds, which is:
\begin{equation}\label{quantumns}
\frac{\widetilde{NS}_b}{\widetilde{Q}_b} = \frac{\tan\left(\frac{\pi}{2m}\right)\cot\left(\frac{\pi}{2dm}\right) - 1}{d - 1}.
\end{equation}
To observe the behaviour of these quantities for high number of inputs $m$ and outputs $d$, we can use the Taylor series expansion in two variables, $1/m$ and $1/d$, and keep the dominant terms. We obtain:
\begin{eqnarray}\label{quantumclassicalinf}
\frac{\widetilde{Q}_b}{\widetilde{C}_b}  &=& 1 + \frac{1}{2m} - \frac{\pi^2 - 6}{12 m^2} + \cdots \\
\frac{\widetilde{NS}_b}{\widetilde{Q}_b} &=& 1 + \frac{ \pi^2/12 - \pi^2/12d^2}{m^2} + \cdots
\end{eqnarray}
Thus, when the parameters $m$ and $d$ are of the same order and both very large, i.e. $m = \Theta(d)$, both ratios tend to $1$. It is interesting to consider how fast the bounds tend towards each other: since the ratio between the no-signalling and quantum bounds lacks a term in $1/m$, it is clear that the quantum bound approaches the no-signalling bound faster than the classical bound approaches the quantum bound.

If we fix the number of outputs $d$ and consider the limit of a large number of inputs $m$, the ratios still tend to $1$. However, if we fix $m$ and considers the limit of large $d$, both ratios tend to constants which are a bit bigger than $1$. They are :
\begin{eqnarray}\label{quantumclassicalinf}
\lim_{d \rightarrow \infty} \widetilde{Q}_b/\widetilde{C}_b  &=& \frac{\left(2m - 1\right) \pi \cot\left(\pi/2m\right)}{4m (m-1)}  \\
\lim_{d \rightarrow \infty}\widetilde{NS}_b/\widetilde{Q}_b &=&\frac{2}{\pi}m\tan\left(\frac{\pi}{2m}\right).
\end{eqnarray}
It is worth mentioning that both functions of $m$ appearing on the right-hand sides of 
the above formulas attain their maxima for $m=2$ which are $4/\pi$ and $3\pi/8$, respectively. To give the reader more insight, we present in Tables \ref{table1} and \ref{table2} the numerical values of these ratios for low values of $m$ and $d$.

Now, let us show that these ratios are strictly larger than $1$ for any value of $m$ and $d$ consistent with a Bell scenario. 
\begin{lem}
For any $m,d \geq 2$, the quantum bound of $\widetilde{I}_{d,m}$ is strictly larger than the classical one, that is,
\begin{equation}\label{lemmaclasseq}
\widetilde{Q}_b/\widetilde{C}_b > 1.
\end{equation}
\end{lem}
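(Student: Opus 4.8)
The plan is to prove the two strict inequalities $0<\widetilde{C}_b<\widetilde{Q}_b=m(d-1)$, which together give $\widetilde{Q}_b/\widetilde{C}_b>1$. I would insert the closed form $\widetilde{C}_b=\tfrac12\tan(\pi/2m)\big[(2m-1)g(0)-g(1-1/m)\big]-m$ from Theorem~\ref{theoclass}, recalling $g(0)=\cot(\pi/2dm)$ and $g(1-1/m)=\cot\!\big(\pi(1-1/2m)/d\big)$. For $m,d\ge2$ the argument of the latter cotangent lies in $(0,\pi/2)$, so $g(1-1/m)>0$ and the term $-g(1-1/m)$ only lowers $\widetilde{C}_b$. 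Hence for the upper bound it already suffices to show $\tfrac12\tan(\pi/2m)(2m-1)\cot(\pi/2dm)\le md$, which would force $\widetilde{C}_b<m(d-1)$ strictly.

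Next I would exploit the function $f(z)=z\cot z$, which, as used in the proof of Lemma~\ref{lemma2}, is strictly decreasing on $(0,\pi)$ with $f(0^+)=1$; thus $\cot z<1/z$ there, and in particular $\cot(\pi/2dm)<2dm/\pi$. Substituting this bound collapses the target to the single inequality $(2m-1)\tan(\pi/2m)<\pi$. This is the crux and the main obstacle: the map $m\mapsto(2m-1)\tan(\pi/2m)$ is \emph{not} monotone (it equals $3$ at $m=2$, dips, then rises) and it approaches $\pi$ from below as $m\to\infty$, so no lossy estimate can work. I would therefore set $x=\pi/2m\in(0,\pi/4]$ and rewrite the inequality as $\tan x<\pi x/(\pi-x)$, proving it uniformly on the interval, e.g.\ by comparing the expansions of the two sides (whose leading difference is $x^2/\pi>0$) and checking the binding endpoint $x=\pi/4$, i.e.\ $m=2$, where it reduces to $1<\pi/3$, equivalently $3<\pi$. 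Chaining the three strict steps then yields $\widetilde{C}_b<\tfrac{md(2m-1)\tan(\pi/2m)}{\pi}<md$, hence $\widetilde{C}_b<m(d-1)=\widetilde{Q}_b$.

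For the lower bound $\widetilde{C}_b>0$ I would reuse Lemma~\ref{lemma2} in its continuous form with $(x,y)=(0,1-1/m)$ (legitimate since $1-1/m<d-1/2m$ for $d\ge2$), which gives $g(0)>(2m-1)\,g(1-1/m)$. This controls the negative term: the bracket obeys $(2m-1)g(0)-g(1-1/m)>g(0)\big[(2m-1)-\tfrac{1}{2m-1}\big]=\tfrac{4m(m-1)}{2m-1}\,g(0)$, so $\widetilde{C}_b>\tfrac{2m(m-1)}{2m-1}\tan(\pi/2m)\cot(\pi/2dm)-m$. It then remains to see that $\tfrac{2(m-1)}{2m-1}\tan(\pi/2m)\cot(\pi/2dm)>1$, which follows from the elementary bounds $\tan(\pi/2m)>\pi/2m$ and $\cot(\pi/2dm)>\tfrac{2dm}{\pi}f(\pi/8)$ (again monotonicity of $f$, since $\pi/2dm\le\pi/8$), giving a lower bound $\ge\tfrac{2(m-1)}{2m-1}\,d\,f(\pi/8)\ge\tfrac43 f(\pi/8)>1$, where $f(\pi/8)=\tfrac{\pi}{8}\cot\tfrac{\pi}{8}>\tfrac34$ and I used $\tfrac{2(m-1)}{2m-1}\ge\tfrac23$ and $d\ge2$. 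I expect the only genuine difficulty to be the sharp trigonometric inequality $(2m-1)\tan(\pi/2m)<\pi$ of the second paragraph; everything else reduces to monotonicity of $f(z)=z\cot z$ together with routine estimates.
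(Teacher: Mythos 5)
Your strategy is sound and, despite the different packaging, it lands on exactly the same core inequality as the paper's proof. The paper shows $\widetilde{Q}_b-\widetilde{C}_b>0$ by substituting $a=1/d$, $x=\pi/2m$, dropping the positive term $ax\cot(a(\pi-x))$ (your dropping of $-g(1-1/m)$ is the identical move), and then showing that $u(a,x)=a(x-\pi)\cot(ax)+\pi\cot x$ is increasing in $a$, so that everything reduces to the $a\to0$ limit, namely $1-\pi/x+\pi\cot x>0$ on $(0,\pi/4]$. Your bound $\cot(ax)<1/(ax)$, i.e.\ $f(z)=z\cot z<1$, is precisely the statement that $a(\pi-x)\cot(ax)$ lies below its $a\to0$ supremum $(\pi-x)/x$, and your crux $(2m-1)\tan(\pi/2m)<\pi$, rewritten as $\tan x<\pi x/(\pi-x)$, is algebraically the same as $1-\pi/x+\pi\cot x>0$. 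So the two derivations coincide up to reparametrization; yours is arguably cleaner in that it reuses the monotonicity of $f(z)=z\cot z$ from Lemma \ref{lemma2} instead of recomputing a partial derivative in $a$. You also supply something the paper omits: a proof that $\widetilde{C}_b>0$, which the paper merely asserts (``since both bounds are larger than 0'') although it is needed to pass from $\widetilde{Q}_b>\widetilde{C}_b$ to the ratio form of the statement. Your positivity argument---Lemma \ref{lemma2} with $(x,y)=(0,1-1/m)$, then $\tan(\pi/2m)>\pi/2m$ and $f(\pi/2dm)\geq f(\pi/8)>3/4$---checks out.

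The one genuine weakness is your verification of the crux itself. Checking the series expansion near $x=0$ (leading difference $x^2/\pi$) and the endpoint $x=\pi/4$ (where the claim reads $3<\pi$) does not prove $\tan x<\pi x/(\pi-x)$ uniformly on $(0,\pi/4]$: positivity near both ends of an interval does not exclude a dip below zero in between, and here the difference $h(x)=\pi x/(\pi-x)-\tan x$ is in fact \emph{not} monotone (it increases up to roughly $x\approx 0.66$ and then decreases), so a genuinely global argument is required for the very step you identify as the main obstacle. The fix is easy and is what the paper does: work instead with the equivalent form $D(x)=\pi\cot x-(\pi/x-1)$, which differs from $h$ by the positive factor $(\pi-x)\cot x/x$. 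Since $\sin x<x$ gives $\frac{d}{dx}\left[\pi\cot x\right]=-\pi/\sin^2 x<-\pi/x^2=\frac{d}{dx}\left[\pi/x-1\right]$, the function $D$ is strictly decreasing on $(0,\pi/4]$, hence bounded below by its endpoint value $D(\pi/4)=\pi-3>0$. With that replacement your proof is complete and rigorous.
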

\begin{proof}
We prove that $\widetilde{Q}_b - \widetilde{C}_b > 0$, which is equivalent  to (\ref{lemmaclasseq}) since both bounds are larger than $0$. This inequality can be written as:
\begin{equation}
2md\cot\left(\frac{\pi}{2m}\right) - 2m\cot\left(\frac{\pi}{2dm}\right) + \cot\left(\frac{\pi}{2dm}\right) + \cot\left(\frac{\pi}{d} \left(1 - \frac{1}{2m}\right)\right) > 0.
\end{equation}
If we define $a = 1/d$ and $x = \pi/2m$, it becomes:
\begin{equation}
ax\cot(a(\pi - x)) + a(x - \pi)\cot(ax) + \pi\cot(x) > 0,
\end{equation}
 for $0<a\leq 1/2$ and $0 < x \leq \pi/4$. Since the first term is positive for these intervals, it suffices to show that
\begin{equation}
u(a,x):= a(x - \pi)\cot(ax) + \pi\cot(x) > 0.
\end{equation}
Clearly, $ u(a,x) \geq \min_{a}(u(a,x))$. This minimum corresponds to the limit $a \rightarrow 0$, since the derivative  $\partial u(a,x)/\partial a$ of  $u(a,x)$ with respect to $a$ is strictly positive on the considered intervals of $a$ and $x$. 
Indeed, it holds that
\begin{equation}
\frac{\partial u(a,x)}{\partial a}=(x - \pi) \cot(ax) - \frac{ax(x - \pi)}{\sin^2(ax)},
\end{equation}
which can be rewritten as 
\begin{equation}\label{Cidra}
\frac{\partial u(a,x)}{\partial a}=\frac{\pi-x}{2\sin^2(ax)}\left[2ax - \sin(2ax)\right].
\end{equation}
Now, due to the fact that $y>\sin y$ for $0<y\leq \pi/8$, one has that 
$2ax>\sin(2ax)$ for $0<a\leq 1/2$ and $0<x\leq \pi/4$, 
and therefore the right-hand side of Eq. (\ref{Cidra}) is strictly 
positive within the above intervals.

Now, computing the limit of $u(a,x)$ when $a \rightarrow 0$, one obtains
\begin{equation}
\lim_{a \rightarrow 0} u(a,x) = 1 - \frac{\pi}{x} + \pi \cot(x).
\end{equation}
It can be verified straightforwardly that this expression is strictly positive in the interval $0 < x \leq \pi/4$, by comparing the two functions $\pi \cot(x)$ and $\frac{\pi}{x} - 1$, and noticing that the former upper bounds the latter in the interval $0 < x \leq \pi/4$. Indeed, at $x =  \pi/4$, we have that $\pi \cot(\pi/4) > 3$, and in this interval, both their derivatives are negative, with the derivative of the first function smaller than the derivative of the second one. Thus, $u(a,x) > 0$.
\end{proof}

\begin{lem}
For any $m,d \geq 2$, the no-signalling bound of $\tilde{I}_{d,m}$ is strictly larger than the quantum one, that is, 
\begin{equation}\label{lemmanseq}
\widetilde{NS}_b/\widetilde{Q}_b  > 1.
\end{equation}
\end{lem}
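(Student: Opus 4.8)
The plan is to start from the explicit ratio in Eq.~(\ref{quantumns}) and reduce the inequality~(\ref{lemmanseq}) to a clean trigonometric statement, which I then settle by a monotonicity argument of the same type already used for the classical bound.

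First I would observe that, since $\widetilde{Q}_b=m(d-1)>0$ and the denominator $d-1$ appearing in Eq.~(\ref{quantumns}) is positive, the inequality $\widetilde{NS}_b/\widetilde{Q}_b>1$ is equivalent to
\begin{equation}
\tan\left(\frac{\pi}{2m}\right)\cot\left(\frac{\pi}{2dm}\right)-1>d-1,
\end{equation}
that is, to
\begin{equation}
\tan\left(\frac{\pi}{2m}\right)\cot\left(\frac{\pi}{2dm}\right)>d.
\end{equation}
Introducing the variable $x=\pi/2m$, which ranges over $(0,\pi/4]$ because $m\geq 2$, and noting that $\pi/2dm=x/d$, this becomes the single-variable claim $\tan(x)>d\tan(x/d)$ for $x\in(0,\pi/4]$ and integer $d\geq 2$.

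Next I would recast this as a statement about the function $h(t):=\tan(t)/t$. Writing $t_1=x$ and $t_2=x/d$, both of which lie in $(0,\pi/2)$, the target inequality $\tan(t_1)>d\tan(t_2)$ is exactly $h(t_1)>h(t_2)$ after multiplying through by $t_1=d\,t_2$. Since $d\geq 2$ forces $t_1>t_2$, it therefore suffices to prove that $h$ is strictly increasing on $(0,\pi/2)$.

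The final and only computational step is the monotonicity of $h$. Differentiating,
\begin{equation}
h'(t)=\frac{t\sec^2(t)-\tan(t)}{t^2}=\frac{2t-\sin(2t)}{2t^2\cos^2(t)},
\end{equation}
and since $2t>\sin(2t)$ for every $t>0$, the numerator is strictly positive on $(0,\pi/2)$, so $h'(t)>0$ there. This is precisely the same $y>\sin y$ mechanism already exploited in Eq.~(\ref{Cidra}) for the classical bound, so I do not expect any genuine obstacle; the main (and minor) point is simply to spot the reduction to $h(t)=\tan(t)/t$, after which strict monotonicity closes the chain and yields $\widetilde{NS}_b/\widetilde{Q}_b>1$.
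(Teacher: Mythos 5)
Your proof is correct, and its first half coincides exactly with the paper's own: both reduce the claim via Eq.~(\ref{quantumns}) to the single-variable inequality $\tan(\pi/2m) > d\,\tan(\pi/2dm)$. Where you diverge is in how this inequality is settled. The paper fixes the integer $a=d$ and compares the two functions $\tan(ax)$ and $a\tan(x)$ directly: they agree at $x=0$, and $[\tan(ax)]' = a\sec^2(ax) \geq a\sec^2(x) = [a\tan(x)]'$ on $(0,\pi/2a]$ because cosine is decreasing there, so the former dominates. You instead package the same fact as strict monotonicity of $h(t)=\tan(t)/t$ on $(0,\pi/2)$, proved from $h'(t) = [2t-\sin(2t)]/[2t^2\cos^2 t] > 0$; your derivative computation and the domain bookkeeping ($t_1=\pi/2m\leq\pi/4$ and $t_2=t_1/d$, both in $(0,\pi/2)$) are in order. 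The two mechanisms are of equal difficulty, but your formulation buys two small things: it applies verbatim to any real ratio $d>1$ (the paper states its lemma only for integer $a$, although its argument does not truly need this), and it runs on the same $y>\sin y$ device used in Eq.~(\ref{Cidra}) for the proof that $\widetilde{Q}_b/\widetilde{C}_b>1$, so both strictness lemmas of that appendix would rest on a single elementary fact.
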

\begin{proof}
Writing the inequality explicitely as in (\ref{quantumns}),
it follows that it is enough to show that 
$\tan(\pi/2m)\cot(\pi/2dm)>d$. Let us prove a slightly simpler inequality:
\begin{equation}\label{simplereq}
\tan(\pi/2m)>d\tan(\pi/2dm).
\end{equation}

To this end, we show that $\tan(ax)> a\tan(x)$ for any $0 < x \leq \pi/2a$ and any integer $a\geq 2$. 
We notice that for $x=0$, $\tan(0)=a\tan(0)$, 
and that $[\tan(ax)]'\geq [a\tan(x)]'\geq 0$, meaning that both $\tan(ax)$ and $a\tan(x)$
are monotonically increasing functions and that the former grows faster than the latter.
The inequality for the derivatives holds true because $\cos(x)$ is a monotonically decreasing function for $0 \leq x \leq \pi/2a$ which implies that $\cos(x)\geq \cos(ax)$.

To complete the proof we note that $\tan(\pi/2m)=\tan[d (\pi/2dm)]$ 
and using $x = \pi/2dm$ and $a = d$, one can exploit the above inequality to obtain Eq. (\ref{simplereq}). 
This finally implies Eq. (\ref{lemmanseq}). 
\end{proof}

\begin{table}[ht]
\centering
\begin{tabular}{c|ccccc}
 \backslashbox{d}{m} & 2 & 3 & 4 & 5 & 6\\\hline
2 & 1.414 & 1.299 & 1.232 & 1.189 & 1.159 \\
3 & 1.291 & 1.214 & 1.167 & 1.137 & 1.116 \\
4 & 1.252 & 1.186 & 1.146 & 1.120 & 1.102 \\
5 & 1.233 & 1.173 & 1.136 & 1.112 & 1.095 \\
6 & 1.222 & 1.165 & 1.130 & 1.107 & 1.091 
\end{tabular}
\caption{Numerical values of the ratio $\widetilde{Q}_b/\widetilde{C}_b$ for low number of inputs $m$ and outputs $d$. For $m= d = 2$, one recovers the well-known CHSH $\sqrt{2}$ ratio.}
\label{table1}
\end{table}

\begin{table}[ht]
\centering
\begin{tabular}{c|ccccc}
 \backslashbox{d}{m} & 2 & 3 & 4 & 5 & 6\\\hline
2 & 1.414 & 1.155 & 1.082 & 1.051 & 1.035 \\
3 & 1.366 & 1.137 & 1.073 & 1.046 & 1.031 \\
4 & 1.342 & 1.128 & 1.069 & 1.043 & 1.029 \\
5 & 1.328 & 1.123 & 1.066 & 1.041 & 1.028 \\
6 & 1.319 & 1.120 & 1.064 & 1.040 & 1.027 
\end{tabular}
\caption{Numerical values of the ratio $\widetilde{NS}_b/\widetilde{Q}_b$ for low number of inputs $m$ and outputs $d$. For $m= d = 2$, one recovers the well-known CHSH $\sqrt{2}$ ratio.}
\label{table2}
\end{table}

\section{Device-independent quantum key distribution}\label{diqkd}

We clarify here our claim that using the maximally entangled state in DI quantum key distribution can lead to better key generation rates, and illustrate it with a simple example. This example shows a case where our inequalities can be more useful than the CGLMP inequalities, despite their lower resistance to noise. We leave out a more general analysis of the key generation rates to a work focused on DIQKD.

We consider the class of protocols studied for instance in \cite{masanes11}. As explained there, the first step of the protocol consists of Alice and Bob making measurements on the copies of bipartite quantum systems that are distributed to them. For a number of rounds $N$, their inputs are set to fixed values, $x = x^{*}$ and $y = y^{*}$, and the outcomes they obtain constitute their two versions of the raw key $\vec{a} = (a_1, a_2, \cdots, a_N)$ and $\vec{b} = (b_1, b_2, \cdots, b_N)$. For a small number of rounds, which can be taken for instance as $N_{\text{est}} = \sqrt{N}$, the inputs are chosen uniformly at random, and the outputs are used to estimate the degree of nonlocality of their correlations, for instance through the violation of a Bell inequality. Note that the type of the rounds is not predetermined, so that an eavesdropper cannot know if a given round will be a key generation round or a Bell inequality violation round. The next steps of the protocol are classical, with an error-correcting stage, where Alice publishes a message about $\vec{a}$ which is used by Bob to correct his errors so that they possess the same secret key at the end.

As stated in \cite{masanes11}, the length of this secret key is lower bounded by $H_{\text{min}} ( \vec{a} | E) - N_{\text{pub}}$, i.e. the min-entropy of Alice's raw key $\vec{a}$ conditioned on an eavesdropper's information, minus the length of the message published by Alice in the error-correcting step. The idea behind our claim is that if Alice and Bob have perfect correlations, the term $N_{\text{pub}}$ amounts to $0$ and leads to a longer secret key. For simplicity, we work in an ideal case (no finite size corrections) and the quantity we study is the asymptotic key generation rate $K$, which can be lower bounded by
\begin{equation}
K \geq H_{\text{min}}(A_{x^{*}}|E) - H(A_{x^{*}}|B_{y^{*}}).
\end{equation}
The first term corresponds to the guessing probability since $H_{\text{min}}(A_{x^{*}}|E) = - \text{log}_d P_{\text{guess}}(a|x^{*})$ and can be bounded numerically using the  Navascu\'es-Pironio-Ac\'in (NPA) hierarchy \cite{npa2007}, based on the violation of a Bell inequality. The second term is the conditional Shannon entropy defined as $H(A_{x^{*}}|B_{y^{*}}) = \sum_{a,b} - P(ab|x^{*}y^{*}) \text{log}_d P(a|bx^{*}y^{*})$. Thus, the more the outcomes of Alice and Bob are correlated for the settings $x^{*}$ and $y^{*}$, the smaller this second term is. 

Let us consider an example, for the simple scenario of $m=2$ and $d=3$. Alice and Bob test the violation of a Bell inequality (CGLMP or ours, $I_{3,2}$) to certify the security of their outcomes. The guessing probability in both cases is found to be equal to $1/3$ at the maximal violation. To generate the key, Alice uses her first setting $A_1$ and Bob a third measurement $B_3$ which is chosen to be the same as $A_1$ (defined in expression (\ref{cglmpmeas})). For our inequality, in the optimal case, this leads to  $H(A_1|B_3) = 0$, since the state is the maximally entangled state and the correlations are thus perfect. For CGLMP, $H(A_1|B_3) = 0.0618$ since the optimal state is $|\psi_{\gamma}\rangle = \frac{|00\rangle + \gamma|11\rangle + |22\rangle}{\sqrt{2 + \gamma^2 }}$, with $\gamma = (\sqrt{11} - \sqrt{3})/2 $ as found in \cite{acindurt, swaptest1}. A numerical optimization on the measurement $B_3$ shows that the best choice to minimize $H(A_1|B_3)$  is indeed to set $B_3$ to be the same as $A_1$. Thus, in the ideal case where the maximal violation is observed, we have
\begin{eqnarray}
K_{I_{3,2}} \geq 1, \\
K_{\text{CGLMP}} \geq 0.9382,
\end{eqnarray}
i.e. our inequality guarantees a key rate of $1$ trit, while CGLMP guarantees a key rate of $0.9382$ trits. 

Let us now consider the effect of white noise on this example. The noise is described by parameter $\eta$, and affects the optimal state $|\psi\rangle$ as:
\begin{equation}
\rho' = (1-\eta) |\psi\rangle\langle\psi| + \eta \frac{\mathbb{I}}{d^2},
\end{equation}
which leads to a non-maximal violation of the Bell inequality. The results are shown in Figure \ref{qkd}. Up until a noise level of $\eta \approx 0.0428$, i.e. $4.3$ percent, our inequality leads to a higher key rate than CGLMP. Around $\eta \approx 0.102$, the key rate has fallen to $0$ for both inequalities.

\begin{figure}
\includegraphics[width=0.7\textwidth]{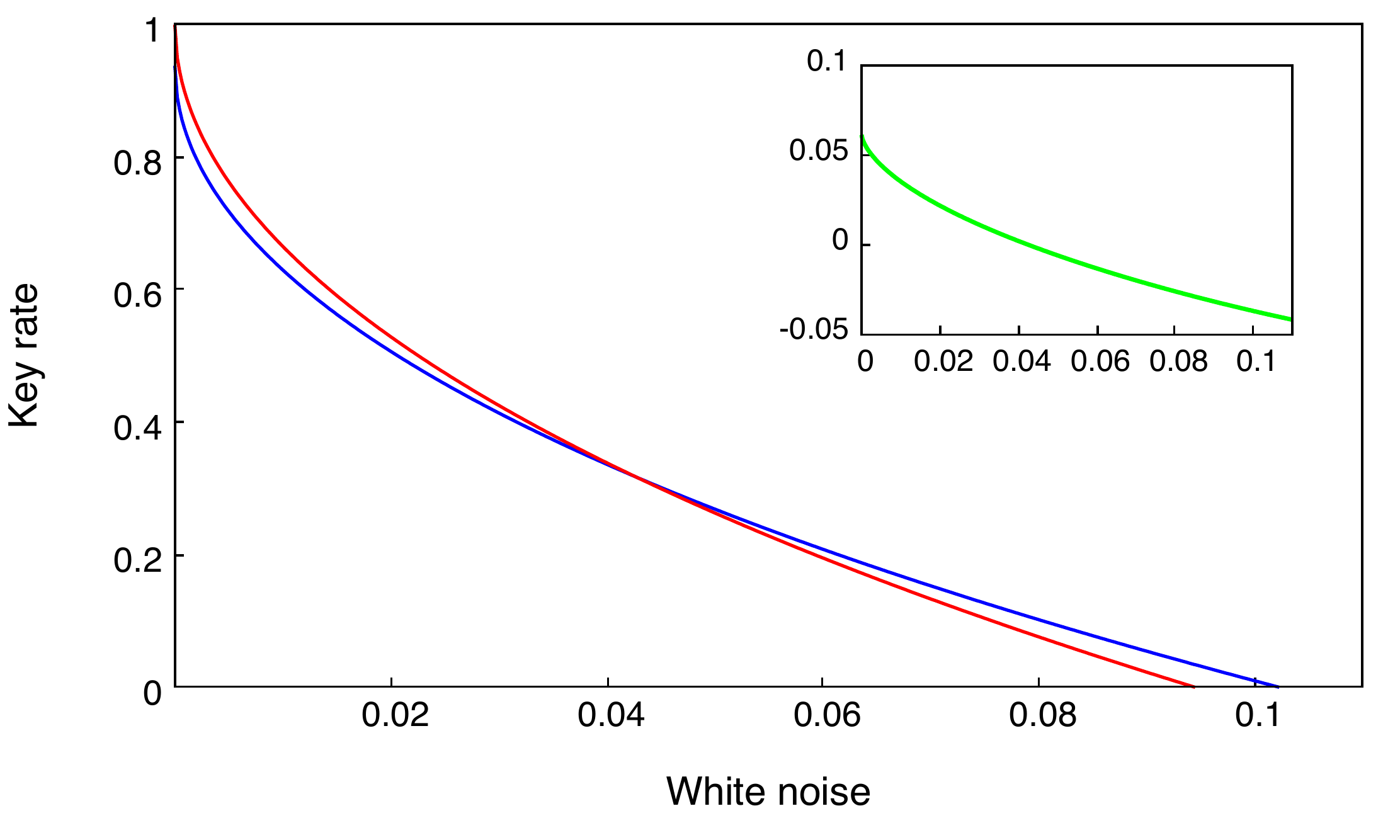}
\caption{Asymptotic key rate $K$ as a function of the white noise $\eta$. The red curve corresponds to the key rate certified with our inequality $I_{3,2}$, while the blue curve corresponds to key rate with CGLMP. On the top right, the difference between the two key rates is plotted as a function of the white noise $\eta$.\label{qkd}}
\end{figure}

Note that our bounds on the guessing probability were obtained numerically, thus this method is limited to simple scenarios. Proving such bounds analytically remains an open question, both for CGLMP and for our inequalities. Nevertheless, we can make some conjectures about the general case. 

In particular, when the maximal violation is observed without any noise, we expect that the eavesdropper does not possess any information, i.e. $H_{\text{min}}(A_{x^{*}}|E) = 1$. This conjecture allows us to connect the key rate to the quantum mutual information $I(A:B)$:
\begin{equation}
K^{\eta = 0} \geq H_{\text{min}}(A_{x^{*}}|E) - H(A_{x^{*}}|B_{y^{*}}) = H(A_{x^{*}}) - H(A_{x^{*}}|B_{y^{*}}) \equiv I(A_{x^{*}}:B_{y^{*}}).
\end{equation}
One can easily compute the mutual information for the case when projective measurements are applied on a bipartite pure state $|\psi_{AB}\rangle$. It is straightforward to see that the mutual information is upper bounded by the entanglement entropy of the state, $I(A:B) \leq E(|\psi_{AB}\rangle)$. For a state $\rho_{AB} = |\psi_{AB}\rangle \langle \psi_{AB}|$, the entropy of entanglement \cite{bruss} is defined as 
\begin{equation}
E(|\psi_{AB}\rangle) = - \text{Tr}(\rho_A \text{log} \rho_A) = - \text{Tr}(\rho_B \text{log} \rho_B),
\end{equation}
with the reduced density matrices $\rho_A = \text{Tr}_B (\rho_{AB})$ and $\rho_B = \text{Tr}_A (\rho_{AB})$ (here we use logarithm to base $d$).  The bound is tight, i.e. $I(A:B) = E(|\psi_{AB}\rangle)$, when the measurements are performed in the Schmidt basis of the state, which corresponds to the best possible choice of measurements $x^{*}, y^{*}$ to generate a secret key, given that state. Note that implementing these Schmidt basis measurements in the protocol may not be possible, depending on the Bell inequality used and its own optimal measurements.

In \cite{zohrengill}, the authors investigated numerically the states that maximally violate the CGLMP inequalities, and they found that their entanglement entropy decreases as a function of $d$. On the other hand, the entanglement entropy of the maximally entangled state is equal to 1 and independent of the dimension. Since this quantity upper bounds the mutual information, these results indicate that the key rate for $\eta = 0$ would decrease monotonically with $d$ for the CGLMP states, while our key rate would remain equal to 1. In conclusion, we can conjecture in the noiseless case that the advantage of our inequality over CGLMP grows with the dimension of the systems used for DIQKD.

\section{Structure of the set of quantum correlations}\label{appendixstructure}

We discuss in this section an aspect of our results that is linked to the fundamental question of the study of the set of quantum correlations. In particular, our results allow us to gain insight into the structure of the boundary of this set. Indeed, a feature of our inequalities worth highlighting is that their Tsirelson bound corresponds to the bound obtained using the NPA hierarchy at the first level ${\mathcal Q}_1$. This is a rare property, which has been previously observed only for XOR games (see, e.g., \cite{Wehner}) and follows from our SOS decomposition  (see Eq. (\ref{sos_app})). Indeed, the degree of an optimal SOS decomposition for a Bell operator is directly linked to the level of the NPA hierarchy at which the quantum bound is obtained \cite{PNA}. An SOS of degree one, as in our case, corresponds to the first level $\mathcal{Q}_1$. 

This means that the boundaries of the sets $\mathcal{Q}$ and $\mathcal{Q}_{1}$ intersect at the maximal violation of our inequalities.
This observation along with the results of Ref. \cite{devicente2015}
seem to suggest that the boundaries of $\mathcal{Q}$ and $\mathcal{Q}_{1}$ intersect at points that correspond to the maximal violation of Bell inequalities attained by maximally entangled states. Notice, however, that the opposite implication is not true. That is, there exist Bell inequalities whose maximal violation by the maximally entangled state does not correspond to the intersection of $\mathcal{Q}$ and $\mathcal{Q}_1$ \cite{Liang}. The above property, if proven in general, could be used to characterize $\mathcal{Q}_{1}$.


\begin{thebibliography}{99}

\bibitem{bell1964} J.S. Bell, Physics \textbf{1}, 195 (1964).

\bibitem{review} N. Brunner, D. Cavalcanti, S. Pironio, V. Scarani, and S. Wehner, Rev. Mod. Phys. \textbf{86}, 419 (2014).

\bibitem{mayersyao} D. Mayers, and A. Yao, Proc. 39th Ann. Symp. on Foundations of Computer Science (FOCS), 503 (1998).

\bibitem{acin07} A. Ac\'{i}n, N. Brunner, N. Gisin, S. Massar, S. Pironio, and V. Scarani, Phys. Rev. Lett \textbf{98}, 230501 (2007).

\bibitem{colbeck} R. Colbeck, PhD thesis, University of Cambridge (2006); 
R. Colbeck, and A. Kent, J. Phys. A: Math. Theor. \textbf{44}, 095305 (2011).

\bibitem{pironionature} S. Pironio \textit{et al.}, Nature \textbf{464}, 1021 (2010).

\bibitem{chsh1969} J. F. Clauser, M. A. Horne, A. Shimony, and R. A. Holt, Phys. Rev. Lett. \textbf{23}, 880 (1969).

\bibitem{cglmp} D. Collins, N. Gisin, N. Linden, S. Massar and S. Popescu, Phys. Rev. Lett. \textbf{88}, 040404 (2002).

\bibitem{acin2012} A. Ac\'{i}n, S. Massar, and S. Pironio, Phys. Rev. Lett. \textbf{108}, 100402 (2012).

\bibitem{tsirelson1980} B. S. Cirel'son, Lett. Mat. Phys. \textbf{4}, 93 (1980).

\bibitem{devicente2015} J. I. de Vicente, Phys. Rev. A \textbf{92}, 032103 (2015).

\bibitem{wonminson} W. Son, J. Lee, and M. S. Kim, Phys. Rev. Lett. \textbf{96}, 060406 (2006).

\bibitem{lee2007} S.-W. Lee, Y. W. Cheong, and J. Lee, Phys. Rev. A \textbf{76}, 032108 (2007).

\bibitem{Ji}S.-W. Ji, J. Lee, J. Lim, K. Nagata, and H.-W. Lee, Phys. Rev. A \textbf{78}, 052103 (2008).

\bibitem{Liang}Y.-C. Liang, C.-W. Lim, and D.-L. Deng, Phys. Rev. A \textbf{80}, 052116 (2009).

\bibitem{Lim} J. Lim, J. Ryu, S. Yoo, C. Lee, J. Bang, and J. Lee, New J. Phys \textbf{12} 103012 (2010).

\bibitem{acindurt} A. Ac\'{i}n, T. Durt, N. Gisin, and J. I. Latorre, Phys. Rev. A \textbf{65}, 052325 (2002).

\bibitem{zohrengill} S. Zohren, and R. D. Gill, Phys. Rev. Lett. \textbf{100}, 120406 (2008).

\bibitem{npa2007} M. Navascu\'{e}s, S. Pironio, and A. Ac\'{i}n, Phys. Rev. Lett. \textbf{98}, 010401 (2007); New J. Phys. {\bf 10}, 073013 (2008).


\bibitem{mckague2012} M. McKague, T. H. Yang, and V. Scarani, J. Phys. A \textbf{45}, 455304 (2012).

\bibitem{fine1982} A. Fine, Phys. Rev. Lett. \textbf{48}, 291 (1982).

\bibitem{PR}S. Popescu and D. Rohrlich, Found. Phys. \textbf{24}, 379 (1994).

\bibitem{ravi}
R. Ramanathan, J. Tuziemski, M. Horodecki and P. Horodecki, Phys. Rev. Lett. \textbf{117}, 050401 (2016).

\bibitem{pawlowski09} M. Paw\l{}owski, T. Paterek, D. Kaszlikowski, V. Scarani, A. Winter, and M. \.Zukowski, Nature \textbf{461}, 1101 (2009).

\bibitem{wunderlich} M. Navascu\'{e}s, and H. Wunderlich, Proc. Roy. Soc. Lond A \textbf{466}, 881 (2010).

\bibitem{lorthogonality} T. Fritz, A. B. Sainz, R. Augusiak, J. B. Brask, R. Chaves, A. Leverrier, A. Ac\'{i}n, Nature Communications \textbf{4}, 2263 (2013).

\bibitem{navascues2015} M. Navascu\'{e}s, Y. Guryanova, M. J. Hoban, and A. Ac\'{i}n, Nature Communications \textbf{6}, 6288 (2014).

\bibitem{Masanes}Ll. Masanes, \textit{Necessary and sufficient condition for quantum-generated
correlations}, arXiv:quant-ph/0309137.



\bibitem{kaszlikowski} D. Kaszlikowski, P. Gnaci\'nski, M. \.Zukowski, W. Miklaszewski, and A. Zeilinger, Phys. Rev. Lett. \textbf{85}, 4418 (2000).

\bibitem{barrett2006} J. Barrett, A. Kent, and S. Pironio, Phys. Rev. Lett. \textbf{97}, 170409 (2006).


\bibitem{gill}
A. Ac\'\i n, R. Gill and N. Gisin, Phys. Rev. Lett. \textbf{95}, 210402 (2005).

\bibitem{bancal12} J.-D. Bancal, C. Branciard, N. Brunner, N. Gisin, Y.-C. Liang, J. Phys. A \textbf{45}, 125301 (2012).

\bibitem{bamps2015} C. Bamps, and S. Pironio, Phys. Rev. A \textbf{91}, 052111 (2015).

\bibitem{doherty} A. C. Doherty, Y.-C. Liang, B. Toner, and S. Wehner, Proc. IEEE Conference on Computational Complexity, 199-210 (2008).



\bibitem{chained}P. A. Pearle, Phys. Rev. D \textbf{2}, 1418 (1970);
S. L. Braunstein and C. Caves, Ann. Phys. (N.Y.) \textbf{202}, 22 (1990).

\bibitem{BuhrmanMassar}H. Buhrman and S. Massar, Phys. Rev. A \textbf{72}, 052103 (2005).

\bibitem{swaptest1} T.H. Yang, and T. V\'{e}rtesi, J-D. Bancal, V. Scarani, and M. Navascu\'{e}s, Phys. Rev. Lett. \textbf{113}, 040401 (2014).

\bibitem{pironiomassar} S. Pironio and S. Massar, Phys. Rev. A \textbf{87}, 012336 (2013).

\bibitem{dhara2013} C. Dhara, G. Prettico, and A. Ac\'{i}n, Phys. Rev. A \textbf{88}, 052116 (2013).

\bibitem{marcusmarcin} M. Huber, and M. Paw\l{}owski, Phys. Rev. A \textbf{88}, 032309 (2013).

\bibitem{supic16} I. \v{S}upi\'{c}, R. Augusiak, A. Salavrakos, and A. Ac\'{i}n, New J. Phys. \textbf{18}, 035013 (2016).

\bibitem{masanes11} Ll. Masanes, S. Pironio, and A. Ac\'{i}n, Nature Communications \textbf{2}, 238 (2011).

\bibitem{bruss} D. Bruss, J. Math. Phys \textbf{43}, 4237 (2002).

\bibitem{Wehner}S. Wehner, Phys. Rev. A \textbf{73}, 022110 (2006).

\bibitem{PNA} S. Pironio, M. Navascu\'{e}s, A. Ac\'in, SIAM J. Optim. \textbf{20}, 2157 (2010).

\end{thebibliography}
\end{document}